\newtheorem{theorem}{Theorem}
\newtheorem{definition}{Definition}
\newtheorem{remark}{Remark}
\newtheorem{lemma}{Lemma}
\newtheorem{corollary}{Corollary}
\def\BibTeX{{\rm B\kern-.05em{\sc i\kern-.025em b}\kern-.08em
    T\kern-.1667em\lower.7ex\hbox{E}\kern-.125emX}}
\begin{document}

\title{System Level Synthesis for Affine Control Policies: Model Based and Data-Driven Settings

}

\author{
    Lukas Sch\"uepp  \and
    Giulia De Pasquale \and Florian D\"orfler  \and Carmen Amo Alonso \thanks{Lukas Sch\"uepp and Florian D\"orfler are with the Automatic Control Laboratory, Department of Electrical Engineering and Information Technology, ETH Zurich, Physikstrasse
3 8092 Zurich, Switzerland.(e-mail:\{ lukaschu,dorfler\}@ethz.ch). Carmen Amo Alonso is with the Computer Science Department of Stanford University, Palo Alto, CA. Giulia De Pasquale is with with the Control Systems Group, Department of Electrical Engineering, TU Eindhoven, Flux
Groene Loper 19
5612AP Eindhoven, The 
Netherlands. (e-mail: g.de.pasquale@tue.nl). This research is supported by the Swiss National Science Foundation under NCCR Automation.}}
\maketitle
\begin{abstract}
There is an increasing need for effective control of systems with complex dynamics, particularly through data-driven approaches.
System Level Synthesis (SLS) has emerged as a powerful framework that facilitates the control of large-scale systems while accounting for model uncertainties. SLS approaches are currently limited to linear systems and time-varying linear control policies, thus limiting the class of achievable control strategies.
We introduce a novel closed-loop parameterization for time-varying affine control policies, extending the SLS framework to a broader class of systems and policies. We show that the closed-loop behavior under affine policies can be equivalently characterized using past system trajectories, enabling a fully data-driven formulation. This parameterization seamlessly integrates affine policies into optimal control problems, allowing for a closed-loop formulation of general Model Predictive Control (MPC) problems. To the best of our knowledge, this is the first work to extend SLS to affine policies in both model-based and data-driven settings, enabling an equivalent formulation of MPC problems using closed-loop maps. We validate our approach through numerical experiments, demonstrating that our model-based and data-driven affine SLS formulations achieve performance on par with traditional model-based MPC.
\end{abstract}

\begin{IEEEkeywords}
Data Driven Control, System Level Synthesis, Affine Systems, Distributed Control
\end{IEEEkeywords}

\section{Introduction}
The increasing complexity of control systems, ranging from power grids \cite{power_grid_ctrl} to social networks \cite{large_scale_sn}, has challenged the effectiveness of traditional control paradigms. While Model Predictive Control (MPC) is often an effective strategy, it can be limited by the complex nature of these systems.  
The overall dynamics turn out  highly non-linear and the intricate interactions among system states make the identification of precise analytical models a complex task. 
In this context,
data-driven approaches have become popular as they eliminate the need for explicitly defining a model of the system, reducing the burden of manual identification \cite{dd_large_scale_opt,dd_large_scale_loewner,data_dlmpc}.  
However, solutions beyond the linear policy and linear systems domain are an active research topic as in that case controlling a system with only past trajectory information becomes a significantly harder problem \cite{affine_sys_ctrl} \cite{Martinelli_2022}.

\textit{Prior work.} The System Level Synthesis (SLS) \cite{sls} framework provides a parameteirzation of closed loop system responses that shift the focus of the control design from the controller to the whole closed-loop system. This paradigm shift has enabled significant advances in distributed MPC \cite{dlmpc} \cite{data_dlmpc} and robust MPC \cite{sls_robust, sls_antoine, sieber_sls, chen2020robust}. Other applications span infinite horizon optimal control \cite{inf_horizon_sls} and input-output formulations \cite{input_output_sls}.
However, the traditional SLS approach restricts itself to linear systems, which substantially narrows its applicability. Though efforts have been made to extend SLS beyond linear systems \cite{NN_SLS, Nonlin_sls_pol_dyn, sls_appr_for_DTL}, these extensions were limited to unconstrained optimal control problems.
A fundamental limitation of all these existing SLS-based methods is their reliance on purely linear control policies. As demonstrated in \cite{borrelli_constrained}, linear policies lack the expressivity needed to capture the piecewise affine nature of solutions that inherently arise in MPC problems, thus highlighting the necessity to extend closed-loop parameterizations to affine policies.
Furthermore, given the abundance of data and the intricacies of system modeling, data-driven methods have gained significant traction in control theory \cite{model_to_data_driven_shift}. More specifically for MPC, several innovations have emerged for both linear \cite{Berberich_dd_MPC, ber_köhler_rob_dd_MPC} and nonlinear systems \cite{Johannes_dd_MPC}. While tube-based approaches offer stability and robustness guarantees through constraint tightening, they typically require offline tube computation, introducing inherent conservatism \cite{chen2020robust}. In contrast, SLS-based robust methods directly design closed-loop maps from disturbances to system trajectories, providing a more transparent characterization of disturbance effects \cite{chen2020robust}. Though the SLS framework has been extended to data-driven settings \cite{data_sls} and applied effectively for robust control \cite{micheli2024data}, it has yet to accommodate affine policies or address the broader class of affine systems. Despite recent advances in data-driven control for affine systems \cite{Martinelli_2022, affine_sys_ctrl}, integrating these approaches with closed-loop formulations under affine policies—particularly for constrained control problems—remains an open challenge that our work addresses.

\textit{Contribution.}
Our work addresses this gap by introducing a novel closed-loop parameterization with time-varying affine policies and extending it to the data-driven setting. The contributions of our approach are two-fold. On one hand, it provides a generalization of the SLS framework to affine policies. On the other hand, it enables a seamless integration into all nominal MPC formulations with quadratic cost, affine prediction models and polytopic constraints. We extend our approach to the data-driven setting and provide a data-driven characterization of system responses for time-invariant affine systems under time-varying affine policies. To validate our approach, we evaluate the affine SLS framework in both model-based and data-driven settings, comparing it against traditional MPC. Our results demonstrate that the proposed methods achieve the same performance as conventional model-based MPC, confirming their effectiveness.

\textit{Paper Structure.} The remainder of this paper is organized as follows: \Cref{sec:problem_form_section} presents the problem formulation. \Cref{sec:sls_affine_section} presents the SLS with affine policies in the model-based case. \Cref{sec:data_driven_appr} introduces the data-driven approach of the SLS parameterization with affine policies. \Cref{section_simulation} validates our approach via numerical experiments. \Cref{discussion_section} concludes the paper.

\textit{Notation.} 
Lower-case and upper-case Latin and Greek letters such as $x$ and $A$ denote vectors and matrices, respectively, though lower-case letters may also represent scalars or functions, with distinctions made clear by context.  For $t_1, t_2 \in \mathbb{N}$ with $t_1 < t_2$, both $[t_1,t_2]$ and $t_1:t_2$ represent the integer set $\{t_1,\dots, t_2 \}$.  Given a matrix $A \in \mathbb{R}^{m \times n}$, we denote its $(i,j)$ entry by $A_{i,j}$. Bracketed indices indicate the true system time, i.e., the system is at state $x(t)$ at time $t$. Subscripts denote prediction time indices within an MPC loop, i.e., $x_t$ is the $t^\mathrm{th}$ predicted state. Calligraphic letters, such as $\mathcal{S}$, represent sets or concatenated matrices -- the distinction will be clear from context. Boldface letters, such as $\mathbf{x}$ and $\mathbf{K}$, denote finite-horizon signals and lower block triangular, i.e., causal, operators respectively:
\[
\mathbf{x}=\begin{bmatrix} x_{0}\\x_{1}\\\vdots\\x_{T} \end{bmatrix},\quad
\mathbf{K}=\begin{bmatrix} K^{0,0} & & & \\ K^{1,1} & K^{1,0} & & \\ \vdots & \ddots & \ddots & \\ K^{T,T} & \dots & K^{T,1} & K^{T,0} \end{bmatrix},
\]
where each $K^{i,j}$ is a matrix of compatible dimension. Here, $\mathbf{K}$ represents the convolution operation induced by a time-varying controller $K_t(x_{0:t})$, such that $u_t = \mathbf{K}^{t,:} \mathbf{x}$, where $\mathbf{K}^{t,:}$ represents the $t$-th block-row of $\mathbf{K}$. Let $I_n \in \mathbb{R}^{n\times n}$ denote the identity matrix of size $n$ and $\mathbf{0}_{n\times m}$ the zero matrix of dimension $n\times m$. The vectors $\mathbf{1}_n$ and $\mathbf{0}_n$ represent the $n$-dimensional column vectors whose entries are all equal to $1$ and $0$, respectively. When clear from context, subscripts will be omitted. The symbol $\lVert \cdot \rVert_p$ denotes the $p$-norm.

\section{Problem Formulation}
\label{sec:problem_form_section}
Consider the discrete-time varying affine system with dynamics
\begin{equation}\label{eqn:system}
x(t+1) = A(t)x(t) + B(t)u(t) + s + w(t),
\end{equation}
where $x \in \mathbb{R}^{n}$ is the state, $u \in \mathbb{R}^{m}$ is the input, $s\in \mathbb{R}^n$ is a constant term, and $w \in \mathbb{R}^n$ is the disturbance or noise. Our goal is to define a closed-loop system response over a time horizon $T$ for system \eqref{eqn:system}, under time-varying affine control policies.
Furthermore, we study the design of a closed-loop MPC policy for system \eqref{eqn:system} in the noiseless case with time-invariant system matrices $A, B$. As is standard, a model predictive controller is implemented by solving a series of finite horizon optimal
control problems, with the problem solved at time $\tau$ parameterized with the initial condition $x_0 = x(\tau)$ over a prediction
horizon $T$ given by:
\begin{align}\label{Traditional_MPC}
    & \underset{{u_{0:T}, x_{0:T}}}{\min}&  & J(u_{0:T},x_{0:T}) \\
    & \text{s.t.} &  &\begin{aligned} \nonumber
    &x_{t+1} =  A  x_t +  B u_t + s\\
    & H_x x_t+ H_uu_t \leq h\\
    &x_0 = x(0),\text{ } x_T \in \mathcal{X}_T \quad t\in[0,T],
\end{aligned}
\end{align}
where the subscript $t$ refers to the variable's prediction $t$ steps ahead. The matrices $H_x\in \mathbb{R}^{d \times n}$, $H_u\in \mathbb{R}^{d \times m}$ and $h \in \mathbb{R}^d$, define $d$ polytopic constraints.
$\mathcal{X}_T$ is a polyhedra that defines the terminal set. The cost function $J$ is defined as
\begin{equation}\label{cost_fn}
J(u_{0:T},x_{0:T})= \lVert Px_T \rVert_p + \sum_{t=0}^{T-1}\lVert Qx_t\rVert_p + \lVert Ru_t \rVert_p,
\end{equation}
where matrices $Q \in \mathbb{R}^{n\times n}$, $R \in \mathbb{R}^{m \times m}, P\in \mathbb{R}^{n \times n}$ respectively and  $p\in\{1,2,\infty\}$. For $p=2$, we assume that $Q=Q^{\top}\succeq0 \text{, }R=R^{\top}\succ0\text{, } P=P^{\top}\succeq0$. If $p=1$ or $p=\infty$, we assume that $Q,P,R$ are full  rank matrices \cite{borrelli_constrained}.

It is a well-known fact that the control policies resulting from the MPC problem \eqref{Traditional_MPC} are piece-wise affine policies \cite{morari_MPQP}. However, closed-loop versions of the SLS formulation, in both the model-based \cite{sls} and the data-driven setting \cite{data_sls}, provide a \emph{linear} time-varying (LTV) parameterization of the resulting control policies \cite{dlmpc,data_sls}, where the predicted control action at time $t$, $u_t$, is a linear combination of all previously predicted states $x_0, x_1, \dots, x_{t_0}$. More specifically
\begin{equation} \label{linear_policies}
    u_t = \sum_{\tau =0}^t K_{t,\tau}x_\tau
\end{equation}
where $K_{t,\tau} \in \mathbb{R}^{m \times n}$ is a linear feedback law that maps the state $x_t$ onto the input $u_\tau$ with $\tau \leq t$. Therefore, the LTV policy \eqref{linear_policies} is not rich enough to encompass all possible policies solving \eqref{Traditional_MPC}.

Here, we study a closed-loop parameterization of system \eqref{eqn:system} with time-varying affine control policies, where we replace \eqref{linear_policies} by
\begin{equation} \label{ctrl_policy}
u_t = \sum_{\tau =0}^t K_{t,\tau} x_\tau + u_{t}^s,
\end{equation}
where $u_{t}^s \in \mathbb{R}^{m}$ is the affine part of the policy at time~$t$.

We leverage this parameterization to reformulate the MPC problem \eqref{Traditional_MPC} using closed-loop maps. We do this by first introducing a closed-loop parameterization of affine policies that allow to generalize previous MPC formulations \cite{dlmpc}. Then, we port the resulting parameterization into the data-driven case where only past trajectories are available. 

\section{System Level Synthesis with Affine Policies}\label{sec:sls_affine_section}
We introduce a closed-loop parameterization that extends the SLS parameterization introduced in \cite{sls} to accommodate time-varying affine control policies. We first show how this naturally permits the closed-loop parameterization of affine systems. Then, we illustrate how the new closed-loop system responses can be used to equivalently reformulate the MPC problem \eqref{Traditional_MPC} with closed-loop maps. In this section, we operate under the assumption that dynamic matrices $A$, $B$ and the constant $s$ are known. We extend our results to the data-driven setting in \ref{sec:data_driven_appr}.

\subsection{System Level Synthesis with Affine Policies}
We start by rearranging the system's matrices over time into block-diagonal matrices $\mathcal{A} = {\rm diag}(A(0),\dots,A(T\!-\!1),\mathbf{0}_{n \times n})$, $\mathcal{B} = {\rm diag}(B(0),\dots,B(T\!-\!1),\mathbf{0}_{n \times m})$. Moreover, we use the signal symbols to denote length-$(T\!+\!1)$ time series of the respective variables: $\mathbf{x}\in \mathbb{R}^{n(T+1)}$ represents the state trajectory, $\mathbf{u}\in \mathbb{R}^{m(T+1)}$ represents the input trajectory, $\mathbf{s}=[s^{\top},...,s^{\top},\mathbf{0}_{n}^\top]^{\top}\in \mathbb{R}^{n(T+1)}$ represents the constant setpoint repeated over the horizon, and $\mathbf{w}=[x(0)^{\top},w(0)^{\top},...,w(T\!-\!1)^{\top}]^{\top}\in \mathbb{R}^{n(T+1)}$ represents the initial state and disturbances. Similarly, the control policy \eqref{ctrl_policy} over a time horizon $T$ can be rewritten using a lower block diagonal operator $\mathbf{K}$ and a signal vector $\mathbf{u_s}$ as
\begin{equation} \label{eq:k_law}
\mathbf{u} = \mathbf{K}   \mathbf{x}  + \mathbf{u_s}. 
\end{equation}
Let $\mathcal{Z}$ be the block-downshift operator, a matrix with identity blocks along its first sub-diagonal and zeros elsewhere.
We rewrite the dynamics of \eqref{eqn:system} for a time horizon of length $T$ as
\begin{equation*}
    \mathbf{x} =\mathcal{Z}\mathcal{A} \mathbf{x} + \mathcal{Z}\mathcal{B}\left({\mathbf K}\mathbf{x}+ \mathbf{u_s}\right) + \mathcal{Z}\mathbf{s} + \mathbf{w}.
\end{equation*}
By rearranging the terms above one gets
\begin{equation*} 
    (I_{n(T+1)} -\mathcal{Z}\mathcal{A} - \mathcal{Z}\mathcal{B}\mathbf K)\mathbf{x} = \mathcal{Z}(\mathcal{B}\mathbf{u_s} + \mathbf{s}) + \mathbf{w},
\end{equation*}
where $(I_{n(T+1)}-\mathcal{Z}\mathcal{A}-\mathcal{Z}\mathcal{B} \mathbf K)$ is a lower block diagonal matrix with identity blocks on the diagonal. Since this is invertible, we can alternatively write the above as \begin{equation*}
\mathbf{x} = (I_{n(T+1)} -\mathcal{Z}\mathcal{A} - \mathcal{Z}\mathcal{B}\mathbf K)^{-1}[I_{n(T+1)} \quad \mathcal{Z}(\mathcal{B}\mathbf{u_s}+\mathbf{s})]\begin{bmatrix} \mathbf{w} \\ \mathbf{1} \end{bmatrix}.
\end{equation*}
In the above expression, it is possible to recover the state system response $\mathbf{\Phi_x}$ to the initial condition and disturbance from the traditional SLS parameterization \cite{sls}, which is a lower block-diagonal matrix with identities along the diagonal:
\begin{subequations} \label{eq:state_system_response}
    \begin{gather} \label{eqn:Phi_x}
        \boldsymbol{\Phi_x}=  (I_{n(T+1)} -\mathcal{Z}\mathcal{A} - \mathcal{Z}\mathcal{B} \mathbf K)^{-1}.
    \intertext{Furthermore, the remaining term can be seen as the complement to the affine system response $\boldsymbol{\phi_x}  \in \mathbb{R}^{n(T+1)}$:}
     \label{eqn:phi_x}
        \boldsymbol{\phi_x} = \boldsymbol{\Phi_x}\mathcal{Z}(\mathcal{B}\mathbf{u_s}+\mathbf{s}).
    \end{gather}
\end{subequations}
One can rewrite the expression for the affine controller \eqref{eq:k_law} as
\begin{equation*}
\begin{aligned}
\mathbf{u} &= \mathbf{K}\mathbf{x} + \mathbf{u_s} = \mathbf{K} [\boldsymbol{\Phi_x} \quad \boldsymbol{\phi_x}]\begin{bmatrix} \mathbf{w} \\ \mathbf{1} \end{bmatrix} + \mathbf{u_s}\\
&=\begin{bmatrix}\mathbf{K}\boldsymbol{\Phi_x} & \mathbf{K}\boldsymbol{\phi_x} + \mathbf{u_s} \end{bmatrix} \begin{bmatrix} \mathbf{w} \\ \mathbf{1} \end{bmatrix}.
\end{aligned}
\end{equation*}

Hence, the map between $\mathbf u$ and $\mathbf w$ is identical to the traditional SLS system response, i.e., $\boldsymbol \Phi_u= \mathbf{K} \boldsymbol \Phi_x$, which is a lower block-diagonal matrix:
\begin{subequations}\label{eq:input_system_response}
    \begin{gather}\label{eqn:Phi_u}
        \boldsymbol{\Phi_u} = \mathbf K(I_{n(T+1)} -\mathcal{Z}\mathcal{A} - \mathcal{Z}\mathcal{B} \mathbf K)^{-1}.
    \intertext{Furthermore, the remaining term $\mathbf{K}\boldsymbol{\phi_x} + \mathbf{u_s} = \mathbf{K}\boldsymbol{\Phi_x}\mathcal{Z}(\mathcal{B}\mathbf{u_s}+\mathbf{s})+ \mathbf{u_s}$ can be seen as the complement to the affine system response $\boldsymbol \phi_u\in \mathbb{R}^{m(T+1)}$}
    \label{eqn:phi_u}
        \boldsymbol{\phi_u} = \boldsymbol{\Phi_u} \mathcal{Z} (\mathcal{B}\mathbf{u_s}+\mathbf{s}) + \mathbf{u_s}.
    \end{gather}
\end{subequations}
These derivations motivate the following definition:
\begin{definition}\emph{(Affine SLS Parameterization).}
    Given the dynamics of system \eqref{eqn:system}, the closed-loop maps from the disturbance $\mathbf{w}$ to the state and input trajectories, $\mathbf{x}$ and $\mathbf{u}$ respectively, under affine policies \eqref{eq:k_law} are of the form
\begin{equation}\label{closed_loop_response}
\begin{bmatrix} \mathbf{x} \\ \mathbf{u}\end{bmatrix} = \begin{bmatrix} \boldsymbol{\Phi_x} & \boldsymbol{\phi_x} \\ \boldsymbol{\Phi_u} & \boldsymbol{\phi_u}\end{bmatrix} \begin{bmatrix} \mathbf{w} \\ \mathbf{1} \end{bmatrix},
\end{equation}
where $\{\boldsymbol{\Phi_x},\boldsymbol{\phi_x},\boldsymbol{\Phi_u},\boldsymbol{\phi_u}\}$ satisfy \eqref{eqn:Phi_x}-\eqref{eqn:phi_u}.
\end{definition}
In what follows, we show that the affine SLS parameterization \eqref{closed_loop_response} extends the linear SLS framework of \cite[Theorem 2.1]{sls} to accommodate affine control policies.
\begin{theorem}[SLS with Affine Policies]\label{SLS_formulation}
Given the system dynamics \eqref{eqn:system} with affine state feedback law \eqref{ctrl_policy} over the time horizon $t\in[0,T]$, the following are true:
\begin{enumerate} \label{theorem_sls}
    \item The affine subspace induced by\footnote{With abuse of notation $I=I_{n(T+1)}$.}
    \begin{equation} \label{sys_constraints}
    \begin{aligned}
    &\begin{bmatrix} I-\mathcal{Z}\mathcal{A} & -\mathcal{Z} \mathcal{B} \end{bmatrix}
    \begin{bmatrix} \boldsymbol{\Phi_x} & \boldsymbol{\phi_x} \\ \boldsymbol{\Phi_u} & \boldsymbol{\phi_u}\end{bmatrix} = \begin{bmatrix} I & \mathcal{Z}\mathbf{s}\end{bmatrix}
     \end{aligned}
    \end{equation}
    parameterizes all possible closed-loop maps \eqref{closed_loop_response}. That is, all system responses \eqref{eq:state_system_response} and \eqref{eq:input_system_response} satisfy the above equality.
    \item For any set $\{\boldsymbol{\Phi_x}, \boldsymbol{\phi_x},\boldsymbol{\Phi_u},\boldsymbol{\phi_u}\}$ satisfying \eqref{sys_constraints}, the control policy \eqref{eq:k_law} with $\mathbf{K}= \boldsymbol{\Phi_u} \boldsymbol{\Phi_x} ^{-1}$ and $\mathbf{u_s} = \boldsymbol{\phi_u} - \boldsymbol{\Phi_u\Phi_x}^{-1}\boldsymbol{\phi_x}$ achieves the desired closed-loop maps \eqref{closed_loop_response} with system responses \eqref{eq:state_system_response} and \eqref{eq:input_system_response}.
\end{enumerate}
\end{theorem}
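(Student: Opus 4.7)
The plan is to mirror the classical SLS proof strategy in \cite{sls}, but carefully track the affine contribution so that both the linear map (driven by $\mathbf{w}$) and the offset (driven by $\mathbf{1}$) in \eqref{closed_loop_response} are accounted for. Part 1 (necessity of \eqref{sys_constraints}) reduces to a direct algebraic verification. I would first note that, by \eqref{eqn:Phi_x} and $\boldsymbol{\Phi_u}=\mathbf{K}\boldsymbol{\Phi_x}$, one has $(I-\mathcal{Z}\mathcal{A}-\mathcal{Z}\mathcal{B}\mathbf{K})\boldsymbol{\Phi_x}=I$, which rearranges to $(I-\mathcal{Z}\mathcal{A})\boldsymbol{\Phi_x}-\mathcal{Z}\mathcal{B}\boldsymbol{\Phi_u}=I$, giving the first block-column of \eqref{sys_constraints}. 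For the second block-column, I substitute \eqref{eqn:phi_x}--\eqref{eqn:phi_u}, factor out $\mathcal{Z}(\mathcal{B}\mathbf{u_s}+\mathbf{s})$, apply the identity just established, and see that the $\mathcal{Z}\mathcal{B}\mathbf{u_s}$ terms cancel, leaving exactly $\mathcal{Z}\mathbf{s}$.

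For part 2 (sufficiency), the first step is to argue that $\boldsymbol{\Phi_x}$ is invertible so that $\mathbf{K}$ and $\mathbf{u_s}$ are well defined. This follows because \eqref{sys_constraints} forces the block-diagonal entries of $\boldsymbol{\Phi_x}$ to be identities and the matrix to be block lower triangular, which I would verify by reading off the rows of \eqref{sys_constraints} and using that $\mathcal{Z}\mathcal{A}$, $\mathcal{Z}\mathcal{B}$ are strictly block lower triangular. Next, using $\boldsymbol{\Phi_u}=\mathbf{K}\boldsymbol{\Phi_x}$, the first block-column of \eqref{sys_constraints} can be rewritten as $(I-\mathcal{Z}\mathcal{A}-\mathcal{Z}\mathcal{B}\mathbf{K})\boldsymbol{\Phi_x}=I$, i.e., $\boldsymbol{\Phi_x}^{-1}=I-\mathcal{Z}\mathcal{A}-\mathcal{Z}\mathcal{B}\mathbf{K}$. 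Applying this identity to the second block-column of \eqref{sys_constraints} and substituting $\mathbf{u_s}=\boldsymbol{\phi_u}-\mathbf{K}\boldsymbol{\phi_x}$ should yield $\boldsymbol{\Phi_x}^{-1}\boldsymbol{\phi_x}=\mathcal{Z}(\mathcal{B}\mathbf{u_s}+\mathbf{s})$, an expression I will need a few lines later.

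The final step is to close the loop. I substitute $\mathbf{u}=\mathbf{K}\mathbf{x}+\mathbf{u_s}$ into the lifted dynamics $\mathbf{x}=\mathcal{Z}\mathcal{A}\mathbf{x}+\mathcal{Z}\mathcal{B}\mathbf{u}+\mathcal{Z}\mathbf{s}+\mathbf{w}$, collect terms to obtain $(I-\mathcal{Z}\mathcal{A}-\mathcal{Z}\mathcal{B}\mathbf{K})\mathbf{x}=\mathcal{Z}\mathcal{B}\mathbf{u_s}+\mathcal{Z}\mathbf{s}+\mathbf{w}$, and use $\boldsymbol{\Phi_x}^{-1}=I-\mathcal{Z}\mathcal{A}-\mathcal{Z}\mathcal{B}\mathbf{K}$ together with the identity $\boldsymbol{\phi_x}=\boldsymbol{\Phi_x}\,\mathcal{Z}(\mathcal{B}\mathbf{u_s}+\mathbf{s})$ derived above to conclude $\mathbf{x}=\boldsymbol{\Phi_x}\mathbf{w}+\boldsymbol{\phi_x}$. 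Plugging this back into $\mathbf{u}=\mathbf{K}\mathbf{x}+\mathbf{u_s}$ and using $\boldsymbol{\Phi_u}=\mathbf{K}\boldsymbol{\Phi_x}$ together with the definition of $\mathbf{u_s}$ should give $\mathbf{u}=\boldsymbol{\Phi_u}\mathbf{w}+\boldsymbol{\phi_u}$, matching \eqref{closed_loop_response}.

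The main obstacle I anticipate is bookkeeping: the affine term couples the $\mathbf{w}$-dependent and constant pieces through $\mathbf{K}\boldsymbol{\phi_x}$, so the particular choice $\mathbf{u_s}=\boldsymbol{\phi_u}-\boldsymbol{\Phi_u}\boldsymbol{\Phi_x}^{-1}\boldsymbol{\phi_x}$ must be justified as the unique shift that makes the cross term $\mathbf{K}\boldsymbol{\phi_x}$ cancel out. Beyond this, everything is essentially a one-to-one extension of the linear SLS argument, so I do not expect conceptual difficulties once the invertibility of $\boldsymbol{\Phi_x}$ and the two identities extracted from \eqref{sys_constraints} are in hand.
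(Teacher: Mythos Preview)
Your proposal is correct and follows essentially the same route as the paper: Part~1 is the identical two-block verification, and Part~2 extracts the same two identities $\boldsymbol{\Phi_x}^{-1}=I-\mathcal{Z}\mathcal{A}-\mathcal{Z}\mathcal{B}\mathbf{K}$ and $\boldsymbol{\Phi_x}^{-1}\boldsymbol{\phi_x}=\mathcal{Z}(\mathcal{B}\mathbf{u_s}+\mathbf{s})$ from \eqref{sys_constraints}. The only cosmetic difference is that the paper plugs these back into the defining formulas \eqref{eqn:Phi_x}--\eqref{eqn:phi_u} one at a time, whereas you substitute into the lifted dynamics and read off $\mathbf{x}=\boldsymbol{\Phi_x}\mathbf{w}+\boldsymbol{\phi_x}$, $\mathbf{u}=\boldsymbol{\Phi_u}\mathbf{w}+\boldsymbol{\phi_u}$ directly; the underlying computations are the same.
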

\begin{proof} Statement $1)$ Let us separate equation \eqref{sys_constraints} into two equalities. The first one reads as
\begin{equation*}
\begin{bmatrix} I-\mathcal{Z}\mathcal{A} -\mathcal{Z}\mathcal{B}\end{bmatrix} \begin{bmatrix} \boldsymbol{\Phi_x} \\ \boldsymbol{\Phi_u}\end{bmatrix} = I.
\end{equation*}
This subspace constraint is identical to the standard SLS parameterization \cite{sls}. We present the derivation here for completeness. Specifically, using the definition of $\boldsymbol{\Phi_x}$ and $\boldsymbol{\Phi_u}$ from \eqref{eqn:Phi_x} and \eqref{eqn:Phi_u}, respectively, we have that:
\begin{equation*}
\begin{aligned}
&(I-\mathcal{Z}\mathcal{A})(I-\mathcal{Z}\mathcal{A}-\mathcal{Z}\mathcal{B}\mathbf{K})^{-1} -\mathcal{Z}\mathcal{B}\mathbf{K}(I-\mathcal{Z}\mathcal{A}-\mathcal{Z}\mathcal{B}\mathbf{K})^{-1}\\
& =(I-\mathcal{Z}\mathcal{A} - \mathcal{Z}\mathcal{B}\mathbf{K})(I-\mathcal{Z}\mathcal{A} - \mathcal{Z}\mathcal{B}\mathbf{K})^{-1}= I.\\
\end{aligned}
\end{equation*}

The second equality reads as
\begin{equation*}
\begin{bmatrix} I-\mathcal{Z}\mathcal{A} &-\mathcal{Z}\mathcal{B}\end{bmatrix} \begin{bmatrix} \boldsymbol{\phi_x} \\ \boldsymbol{\phi_u}\end{bmatrix} = \mathcal{Z}\mathbf{s}.
\end{equation*}
We show that this equality holds with the definitions of $\boldsymbol{\phi_x}$ and $\boldsymbol{\phi_u}$ from \eqref{eqn:phi_x} and \eqref{eqn:phi_u}, respectively.  We can write
\begin{equation*}
\begin{aligned}
&(I-\mathcal{Z}\mathcal{A})\boldsymbol{\Phi_x}\mathcal{Z}(\mathcal{B}\mathbf{u_s}+\mathbf{s}) - \mathcal{Z}\mathcal{B}(\boldsymbol{\Phi_u}\mathcal{Z}(\mathcal{B}\mathbf{u_s}+\mathbf{s}) + \mathbf{u_s})\\
&=\begin{bmatrix}I-\mathcal{Z}\mathcal{A} & -\mathcal{Z}\mathcal{B} \end{bmatrix}
\begin{bmatrix}\boldsymbol{\Phi_x}\\ \boldsymbol{\Phi_u}\end{bmatrix} \mathcal{Z}(\mathcal{B}\mathbf{u_s} + \mathbf{s}) -\mathcal{Z}\mathcal{B}\mathbf{u_s}\\
&= \mathcal{Z}(\mathcal{B}\mathbf{u_s} + \mathbf{s}) -\mathcal{Z}\mathcal{B}\mathbf{u_s} = \mathcal{Z}\mathbf{s},
\end{aligned}
\end{equation*}
where the second equality holds because, as previously shown, $(I-\mathcal{Z}\mathcal{A})\boldsymbol{\Phi_x} - \mathcal{Z}\mathcal{B}\boldsymbol{\Phi_u} = I$.

Statement $2)$ First we note that $\boldsymbol{\Phi_x}$ is a lower block diagonal matrix with identities along its diagonal, and thus invertible. We want to show that the affine control policy \eqref{eq:k_law} with $\mathbf{K} = \boldsymbol{\Phi_u} \boldsymbol{\Phi_x}^{-1}$ and $\mathbf{u_s}= \boldsymbol{\phi_u} - \boldsymbol{\Phi_u} \boldsymbol{\Phi_x}^{-1}\boldsymbol{\phi_x}$ achieve the desired system and affine system responses. We start by showing the linear portion of the parameterization, $\boldsymbol{\Phi_x}$ and $\boldsymbol{\Phi_u}$, which is analogous to the standard SLS parameterization \cite{sls}. Specifically, by inserting the expression $\mathbf{K} = \boldsymbol{\Phi_u} \boldsymbol{\Phi_x}^{-1}$ in expressions \eqref{eqn:Phi_x} and \eqref{eqn:Phi_u} we have:
\begin{equation}\label{eq:proof_sys_response}
\begin{aligned}
    &(I -\mathcal{Z}\mathcal{A} - \mathcal{Z}\mathcal{B}{\bf K})^{-1}\!\!=\!(I -\mathcal{Z}\mathcal{A} - \mathcal{Z}\mathcal{B}\boldsymbol{\Phi_u} \boldsymbol{\Phi_x}^{-1})^{-1} = \\
    & (((I-\mathcal{Z}\mathcal{A})\boldsymbol{\Phi_x} - \mathcal{Z}\mathcal{B}\boldsymbol{\Phi_u})\boldsymbol{\Phi_x}^{-1})^{-1} = (\boldsymbol{\Phi_x}^{-1})^{-1}= \boldsymbol{\Phi_x}, \notag
\end{aligned}
\end{equation}
and
\begin{equation*}
    \mathbf{K}\boldsymbol{\Phi_x} = \boldsymbol{\Phi_u} \boldsymbol{\Phi_x}^{-1}\boldsymbol{\Phi_x} = \boldsymbol{\Phi_u},
\end{equation*}
where we made use of equality \eqref{sys_constraints}. 

For the affine system response, we use the expression $\mathbf{u_s} = \boldsymbol{\phi_u} - \boldsymbol{\Phi_u\Phi_x}^{-1}\boldsymbol{\phi_x}$ in equation \eqref{eqn:phi_x} we get
\begin{equation*}
\begin{aligned}
    &\boldsymbol{\Phi_x}\mathcal{Z}(\mathcal{B}(\boldsymbol{\phi_u} - \boldsymbol{\Phi_u} \boldsymbol{\Phi_x}^{-1}\boldsymbol{\phi_x}) + \mathbf{s})\\
    &=\boldsymbol{\Phi_x}(\mathcal{Z}\mathcal{B}\boldsymbol{\phi_u} - \mathcal{Z}\mathcal{B}\boldsymbol{\Phi_u} \boldsymbol{\Phi_x}^{-1}\boldsymbol{\phi_x} + \mathcal{Z}\mathbf{s}),
\end{aligned}
\end{equation*}
Now we note from \eqref{sys_constraints} that $\mathcal{Z}\mathcal{B}\boldsymbol{\phi_u} = (I -\mathcal{Z}\mathcal{A})\boldsymbol{\phi_x} - \mathcal{Z}\mathbf{s}$ and $\mathcal{Z}\mathcal{B}\boldsymbol{\Phi_u}=(I-\mathcal{Z}\mathcal{A})\boldsymbol{\Phi_x} - I$, thus we can write
\begin{equation*}
\begin{aligned}
    &= \boldsymbol{\Phi_x}((I -\mathcal{Z}\mathcal{A})\boldsymbol{\phi_x} - ((I-\mathcal{Z}\mathcal{A})\boldsymbol{\Phi_x} - I) \boldsymbol{\Phi_x}^{-1}\boldsymbol{\phi_x}) \\
    &= \boldsymbol{\Phi_x}((I -\mathcal{Z}\mathcal{A})\boldsymbol{\phi_x} - (I-\mathcal{Z}\mathcal{A})\boldsymbol{\phi_x} + \boldsymbol{\Phi_x}^{-1}\boldsymbol{\phi_x}) \\
    &=\boldsymbol{\Phi_x}\boldsymbol{\Phi_x}^{-1}\boldsymbol{\phi_x} =\boldsymbol{\phi_x}.
\end{aligned}
\end{equation*}
Finally, concerning $\boldsymbol{\phi_u}$, we note from \eqref{eqn:Phi_u} and \eqref{eqn:phi_x} that $\boldsymbol{\Phi_u}\mathcal{Z}(\mathcal{B}\mathbf{u_s}+\mathbf{s})=\mathbf{K}\boldsymbol{\Phi_x}\mathcal{Z}(\mathcal{B}\mathbf{u_s}+\mathbf{s}) = \mathbf{K}\boldsymbol{\phi_x}$. Thus in  \eqref{eqn:phi_u} we get
\begin{equation*}
    \mathbf{K}\boldsymbol{\phi_x} + \mathbf{u_s} = \boldsymbol{\Phi_u} \boldsymbol{\Phi_x}^{-1}\boldsymbol{\phi_x} + \boldsymbol{\phi_u} -  \boldsymbol{\Phi_u} \boldsymbol{\Phi_x}^{-1}\boldsymbol{\phi_x} = \boldsymbol{\phi_u}.
\end{equation*}
\end{proof}

\begin{remark}
Note that the affine SLS parameterization is a strict generalization of the standard SLS parameterization \cite{sls}. In particular, it allows for the parameterization of affine polices even if the system itself is linear, namely $\mathbf{s} = \mathbf{0}_{n(T+1)}$. This comes from the fact that our approach parameterizes one additional column of the system response that is mapped to $\mathcal{Z}\mathbf{s}$ for affine systems or to $\mathbf{0}_{n(T+1)}$ for linear systems. As previously discussed, this enables to capture affine closed-loop behaviors, emerging from MPC policies. 
\end{remark}

Note that, in the absence of noise, i.e., when $\mathbf{w}=\begin{bmatrix}x_0^\intercal & \mathbf{0}_n^\intercal & \dots & \mathbf{0}_n^\intercal \end{bmatrix}^\intercal$, \Cref{theorem_sls} can be simplified.
In that case, only the first $n$ columns of $\boldsymbol{\Phi_x}$ and $\boldsymbol{\Phi_u}$ that map the initial condition to the state and input trajectory, respectively, have to be parameterized.

\begin{corollary}[SLS with Affine Policies for Noiseless Systems]
    Consider the dynamics of system \eqref{eqn:system} with $w(t)=0$, $\forall t >0$. Let $\boldsymbol{\Phi_\ell}\{0\}$, $\boldsymbol{\ell} \in \{\mathbf{x},\mathbf{u}\}$ denote the first $n$ columns of the system response \eqref{closed_loop_response} and $\mathbf{s}=[s^\top,...,s^\top]^\top\in\mathbb{R}^{nT}$. Subspace \eqref{sys_constraints} can be simplified to
\begin{equation} \label{noiseless_constr}
    \begin{aligned}
    &\begin{bmatrix} I_{n(T+1)}-\mathcal{Z}\mathcal{A} & -\mathcal{Z} \mathcal{B} \end{bmatrix}
    \begin{bmatrix}  \boldsymbol{\Phi_{x}}\{0\} & \boldsymbol{\phi_x} \\   \boldsymbol{\Phi_{u}}\{0\} &\boldsymbol{\phi_u}\end{bmatrix}\!\!=\!\!\begin{bmatrix} I_{n} &0 \\0 &  \mathbf{s}\end{bmatrix}.
     \end{aligned}
\end{equation}
The closed-loop map from $x_0$ to the state and input trajectory is then given by
\begin{equation}\label{eqn:noiseless_affine_SLS}
\begin{bmatrix} \mathbf{x} \\ \mathbf{u} \end{bmatrix}  = \begin{bmatrix}  \boldsymbol{\Phi_{x}}\{0\} & \boldsymbol{\phi_x} \\   \boldsymbol{\Phi_{u}}\{0\} &\boldsymbol{\phi_u}\end{bmatrix}  \begin{bmatrix} x_0 \\ \mathbf 1\end{bmatrix} =: \begin{bmatrix}\boldsymbol{\bar\Phi_x} \\ \boldsymbol{\bar\Phi_u}\end{bmatrix} \begin{bmatrix} x_0 \\ \mathbf 1\end{bmatrix}.
\end{equation}
\end{corollary}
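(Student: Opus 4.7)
The plan is to derive the corollary as a direct restriction of Theorem~1 (SLS with Affine Policies) to the noiseless case, where the disturbance signal collapses to $\mathbf{w} = [x_0^\top, \mathbf{0}_n^\top, \dots, \mathbf{0}_n^\top]^\top$. My starting point is the observation that, because only the first $n$ entries of $\mathbf{w}$ are nonzero, the product $\boldsymbol{\Phi_x}\mathbf{w}$ depends exclusively on the first $n$ columns of $\boldsymbol{\Phi_x}$, which are exactly $\boldsymbol{\Phi_x}\{0\}$ (analogously for $\boldsymbol{\Phi_u}$). Substituting this into the closed-loop map \eqref{closed_loop_response} immediately yields $\mathbf{x} = \boldsymbol{\Phi_x}\{0\}\,x_0 + \boldsymbol{\phi_x}$ and $\mathbf{u} = \boldsymbol{\Phi_u}\{0\}\,x_0 + \boldsymbol{\phi_u}$, which is the compact form \eqref{eqn:noiseless_affine_SLS} after stacking $x_0$ and $\mathbf{1}$.

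Next I would derive the simplified subspace constraint \eqref{noiseless_constr} by right-multiplying both sides of \eqref{sys_constraints} by the selection matrix that extracts the first $n$ columns together with the final affine column. The left-hand side then becomes exactly $\begin{bmatrix} I-\mathcal{Z}\mathcal{A} & -\mathcal{Z}\mathcal{B}\end{bmatrix}$ acting on the reduced response $\begin{bmatrix}\boldsymbol{\Phi_x}\{0\} & \boldsymbol{\phi_x}\\ \boldsymbol{\Phi_u}\{0\}& \boldsymbol{\phi_u}\end{bmatrix}$. The right-hand side requires a small structural calculation: restricting $I_{n(T+1)}$ to its first $n$ columns produces $\begin{bmatrix}I_n\\ \mathbf{0}_{nT\times n}\end{bmatrix}$, while the affine column $\mathcal{Z}\mathbf{s}$, with $\mathbf{s} = [s^\top,\dots,s^\top,\mathbf{0}_n^\top]^\top$ from the original Theorem~1 notation, becomes $[\mathbf{0}_n^\top, s^\top,\dots, s^\top]^\top$ after the block-downshift, since the trailing zero block in $\mathbf{s}$ is discarded. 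This is precisely $[\mathbf{0}_n^\top, \mathbf{s}^\top]^\top$ with the corollary's redefined $\mathbf{s}\in\mathbb{R}^{nT}$, matching the right-hand side of \eqref{noiseless_constr}.

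The step I expect to require most care is keeping the two different uses of the symbol $\mathbf{s}$ unambiguous (length $n(T+1)$ in Theorem~1 versus length $nT$ in the corollary) and verifying the block-downshift action on the padded $\mathbf{s}$, especially tracking the leading zero block that appears after applying $\mathcal{Z}$ and showing that it aligns exactly with the zero sub-block in the RHS of \eqref{noiseless_constr}. Once this alignment is confirmed, the corollary follows immediately as a column-restriction of Theorem~1, with no separate invertibility or parameterization argument needed, since $\boldsymbol{\Phi_x}\{0\}$ inherits the lower-block-triangular identity-diagonal structure of $\boldsymbol{\Phi_x}$ and the controller recovery $\mathbf{K}=\boldsymbol{\Phi_u}\boldsymbol{\Phi_x}^{-1}$, $\mathbf{u_s}=\boldsymbol{\phi_u}-\mathbf{K}\boldsymbol{\phi_x}$ can be carried over verbatim from Theorem~1 statement~2).
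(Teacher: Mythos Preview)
Your proposal is correct and follows essentially the same approach as the paper, which proves the corollary in a single line by invoking Theorem~\ref{SLS_formulation} with $\mathbf{w}=[x_0^\top,\mathbf{0}_n^\top,\dots,\mathbf{0}_n^\top]^\top$. Your write-up simply spells out the column-selection and block-downshift bookkeeping that the paper leaves implicit; the only minor slip is the remark that $\boldsymbol{\Phi_x}\{0\}$ inherits an identity-diagonal structure (it is $n(T{+}1)\times n$ and hence not square), but this aside is not needed for the corollary and does not affect the argument.
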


\begin{proof}
    The proof follows from \Cref{theorem_sls} with $\mathbf{w}=\begin{bmatrix}x_0^\intercal & 0_n^\intercal & \dots & 0_n^\intercal \end{bmatrix}^\intercal$.
\end{proof}

\subsection{System Level Synthesis with Affine Policies for MPC}
Here, we leverage the affine SLS parameterization for time-varying affine policies to establish an explicit formulation of the solution to the MPC problem \eqref{Traditional_MPC}. We make use of the fact that control policies resulting from a quadratic MPC problem with linear constraints as in \eqref{Traditional_MPC} are piecewise affine \cite{borrelli_constrained}, and demonstrate that our novel SLS formulation with affine policies can be used to equivalently parameterize \eqref{Traditional_MPC} with closed-loop maps.

\begin{lemma}[Explicit solutions to MPC]\label{lemma:affine_policies_opt_control}
The optimal control input $ u^*_{t} $ obtained as a solution to the MPC problem \eqref{Traditional_MPC} with cost function \eqref{cost_fn} lies within the affine subspace spanned by the corresponding optimal state $x^*_{t}$. Namely, for all $t \in [0,T] $, there exist coefficients $K_t \in \mathbb{R}^{m \times n}$ and $ u_{t,s} \in \mathbb{R}^m$ such that  $u^*_{t} = K_{t} x^*_{t} + u_{t,s}.$
\end{lemma}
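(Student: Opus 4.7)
The plan is to invoke the theory of multi-parametric programming applied to~\eqref{Traditional_MPC}. For $p=2$ the problem is a strictly convex multi-parametric quadratic program (mp-QP) with parameter $x_0=x(\tau)$; for $p\in\{1,\infty\}$ it becomes a multi-parametric linear program (mp-LP) after the standard epigraph reformulation with auxiliary slack variables. In either case, the classical results of \cite{morari_MPQP,borrelli_constrained} guarantee that the feasible parameter set admits a finite polyhedral partition into critical regions, and that on each such region the primal optimizer is an affine function of the parameter. This is the structural fact that the lemma ultimately rests on.

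Fixing a critical region, I would parameterize the trajectory explicitly. From mp-QP/mp-LP theory one obtains $u_t^{*}(x_0)=G_t x_0 + g_t$ for constants $G_t,g_t$ determined by the active set on the region. Propagating the affine dynamics $x_{t+1}=Ax_t+Bu_t+s$ starting from $x_0$ then delivers $x_t^{*}(x_0)=F_t x_0 + f_t$, with $F_t,f_t$ governed by the recursion $F_{t+1}=AF_t+BG_t$, $f_{t+1}=Af_t+Bg_t+s$ and initialization $F_0=I$, $f_0=0$. Eliminating $x_0$ between these two affine expressions produces the sought affine relation $u_t^{*}=K_t x_t^{*}+u_{t,s}$ with $K_t=G_t F_t^{\dagger}$ and $u_{t,s}=g_t-K_t f_t$, where $F_t^{\dagger}$ denotes an appropriate left inverse of $F_t$. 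Because the lemma asserts existence for a single trajectory, the pointwise instance of this identity is immediate once the per-region affine maps are in hand.

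The main obstacle is justifying the elimination step in the rank-deficient case. When $F_t$ does not have full column rank, distinct initial conditions within the region may yield the same predicted state $x_t^{*}$, and the factorization $G_t=K_t F_t$ requires the kernel inclusion $\ker F_t\subseteq\ker G_t$. I would resolve this by appealing to the dynamic-programming structure of the optimizer on the critical region: since both $F_t$ and $G_t$ arise from the same KKT system tied to a fixed active set, any direction in $x_0$ that leaves $x_t^{*}$ invariant must, by forward uniqueness of the optimal tail from $x_t^{*}$, also leave $u_t^{*}$ invariant. This gives the required factorization and hence the existence of coefficients $K_t,u_{t,s}$ of the stated form, completing the argument and linking the lemma to the expressiveness of the affine policy class \eqref{ctrl_policy} used in the SLS parameterization of \Cref{SLS_formulation}.
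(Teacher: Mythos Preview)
Your proposal is correct and rests on the same foundational mp-QP/mp-LP result from \cite{borrelli_constrained,morari_MPQP}, but you take a different route than the paper. The paper applies the piecewise-affine characterization \emph{directly at each time step} by invoking the Bellman principle: since the tail $(u_t^*,\dots,u_T^*)$ solves the same MPC problem~\eqref{Traditional_MPC} with reduced horizon $T-t$ and initial condition $x_t^*$, the first component $u_t^*$ is immediately piecewise affine in $x_t^*$ by \cite[Theorem~8.1/8.3]{borrelli_constrained}, yielding $K_t,u_{t,s}$ without any elimination. You instead parameterize the entire optimizer in $x_0$, propagate the dynamics to obtain $x_t^*(x_0)=F_t x_0+f_t$, and then eliminate $x_0$ via a pseudoinverse---an extra step that forces you to argue the kernel inclusion $\ker F_t\subseteq\ker G_t$, which you ultimately justify by the very dynamic-programming/uniqueness-of-tail principle the paper uses from the outset. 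The paper's argument is therefore shorter and sidesteps the rank-deficiency detour; your approach, on the other hand, makes the dependence on $x_0$ explicit and produces concrete formulas $K_t=G_tF_t^{\dagger}$, $u_{t,s}=g_t-K_tf_t$ in terms of the critical-region data, which could be useful if one wanted to compute the affine gains rather than merely assert their existence.
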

\begin{proof}
It is a well-known fact that the solution \( u_{0}^* \) to problem \eqref{Traditional_MPC} is a piecewise affine map. Specifically, it is piecewise affine over polyhedral regions \( \mathcal{X}_{\text{poly}} \subseteq \mathcal{X}_L \), where \( \mathcal{X}_L \) is the feasible set of problem \eqref{Traditional_MPC}. In particular, by \cite[Theorem 8.1/8.3]{borrelli_constrained}, in each polyhedral region \( \mathcal{X}_{\text{poly}} \) there exist coefficients \( K_{0} \) and \( u_{0,s} \) such that the optimal control input \( u_0^* \) can be expressed as
$u_0^* = K_{0} x^*_0 + u_{0,s}.$
We use this fact to recursively show that each policy at time \( t \in [0,T] \) is also affine. In particular, since \( u_t^* \) is the solution to the same optimization problem but for a reduced time horizon \( T-t \) with initial condition \( x_t^* \), there exist corresponding coefficients \( K_t \) and \( u_{t,s} \) such that $u_t^* = K_t x_t^* + u_{t,s}$.
This holds true for every \( t \in [0,T] \), where problem \eqref{Traditional_MPC} has a time horizon of $T-t$. Thus, all solutions \( u^*_{0:T} \) are spanned by the affine functions \( K_{t} x_t + u_{t,s} \) for each time step \( t \in [0,T] \).
\end{proof}

Using \Cref{lemma:affine_policies_opt_control}, we can formulate the following equivalent characterization of the MPC problem \eqref{Traditional_MPC} with cost \eqref{cost_fn}.
\begin{corollary}\label{cor:affine_sls_CFTOC}
    The MPC problem \eqref{Traditional_MPC} and the optimization problem
    \begin{align}\label{eq:affine_sls_model_based}
    &\underset{\boldsymbol{\bar\Phi_x, \bar\Phi_u}}{\min}&  & \quad J(\boldsymbol{\bar\Phi_x}  \begin{bmatrix} x_0 \\ \mathbf 1\end{bmatrix}, \boldsymbol{\bar\Phi_u}  \begin{bmatrix} x_0 \\ \mathbf 1\end{bmatrix})\\
    & \text{s.t.} &  &\begin{aligned} \nonumber
    & \begin{bmatrix} I-\mathcal{Z}\mathcal{A} & -\mathcal{Z} \mathcal{B} \end{bmatrix}
    \begin{bmatrix} \boldsymbol{\bar\Phi_x} \\ \boldsymbol{\bar\Phi_u}\end{bmatrix} = \begin{bmatrix} I_{n} &0 \\0 &  \mathbf{s}\end{bmatrix} \\
    & \mathcal{H}_x \boldsymbol{\bar\Phi_x}  \begin{bmatrix} x_0 \\ \mathbf 1\end{bmatrix}+ \mathcal{H}_u \boldsymbol{\bar\Phi_u}  \begin{bmatrix} x_0 \\ \mathbf 1\end{bmatrix} \leq \mathbf h\\
    &x_0 = x(0), \text{ }x_T \in \mathcal{X}_T
\end{aligned}
\end{align}
with cost function $J$ from definition \eqref{cost_fn} and constraints $\mathcal H_x = diag(H_x)$, $\mathcal H_u=diag(H_u)$ adapted to signal notation, give rise to the same optimal trajectories $\{x^*_{0:t},u^*_{0:T}\}$. 
That is,  $\{x^*_{0:t},u^*_{0:T}\}$ is an optimal solution to \eqref{Traditional_MPC} if and only if it satisfies \eqref{eqn:noiseless_affine_SLS} for $\boldsymbol{\bar\Phi_x}$ and $\boldsymbol{\bar\Phi_u}$ being the optimal solutions to~\eqref{eq:affine_sls_model_based}.

\end{corollary}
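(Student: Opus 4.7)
The plan is to prove that the two optimization problems share the same optimal trajectories by establishing a cost-preserving bijection between their feasible sets. Since the cost \eqref{cost_fn} depends only on $\{x_{0:T}, u_{0:T}\}$, it suffices to show that every feasible trajectory of \eqref{Traditional_MPC} corresponds to a feasible decision pair $\{\boldsymbol{\bar\Phi_x}, \boldsymbol{\bar\Phi_u}\}$ of \eqref{eq:affine_sls_model_based} realizing the same trajectory via \eqref{eqn:noiseless_affine_SLS}, and conversely. Optimality then transfers in both directions, yielding the stated equivalence.

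For the forward direction, I would take an optimal pair $\{x^*_{0:T}, u^*_{0:T}\}$ of \eqref{Traditional_MPC} and invoke \Cref{lemma:affine_policies_opt_control} to represent it through an affine policy $u^*_t = K_t x^*_t + u_{t,s}$. Assembling these coefficients into $\mathbf{K}$ and $\mathbf{u_s}$ as in \eqref{eq:k_law} and applying the noiseless case of \Cref{SLS_formulation} yields a pair $\{\boldsymbol{\bar\Phi_x}, \boldsymbol{\bar\Phi_u}\}$ that satisfies the subspace equality in \eqref{eq:affine_sls_model_based}. By construction, this pair reproduces the very same trajectory through \eqref{eqn:noiseless_affine_SLS}, so the polytopic inequality, the initial condition $x_0 = x(0)$, and the terminal constraint all transfer, and the cost value is preserved.

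For the reverse direction, given any feasible $\{\boldsymbol{\bar\Phi_x}, \boldsymbol{\bar\Phi_u}\}$ of \eqref{eq:affine_sls_model_based}, I would define $\mathbf x := \boldsymbol{\bar\Phi_x}\begin{bmatrix} x_0 \\ \mathbf 1\end{bmatrix}$ and $\mathbf u := \boldsymbol{\bar\Phi_u}\begin{bmatrix} x_0 \\ \mathbf 1\end{bmatrix}$. The key observation is that the subspace equality, after right-multiplication by $\begin{bmatrix} x_0^\top & \mathbf 1^\top\end{bmatrix}^\top$ and read block-row by block-row, is exactly the dynamics recursion $x_{t+1} = A x_t + B u_t + s$ initialized at $x_0 = x(0)$; the polytopic inequalities translate directly into $H_x x_t + H_u u_t \leq h$ for every prediction step, and the terminal constraint is carried over explicitly. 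Hence $\{\mathbf x, \mathbf u\}$ is MPC-feasible with the same cost value.

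The main obstacle I anticipate is the block-structural bookkeeping needed to verify that the compact equality $\begin{bmatrix} I - \mathcal{Z}\mathcal{A} & -\mathcal{Z}\mathcal{B}\end{bmatrix}\begin{bmatrix}\boldsymbol{\bar\Phi_x} \\ \boldsymbol{\bar\Phi_u}\end{bmatrix}$ equals the block-diagonal right-hand side in \eqref{noiseless_constr} precisely when the generated trajectories $\mathbf x, \mathbf u$ satisfy the affine dynamics. Concretely, the first column block of the right-hand side must account for the initial-condition propagation driven by $x_0$, while the last column block must account for the cumulative effect of the constant drift $s$. Once this correspondence is made explicit — which is a matter of straightforward block algebra using the structure of $\mathcal{Z}$ — the bijection and cost preservation follow, and the corollary drops out as a direct consequence of \Cref{lemma:affine_policies_opt_control} and the noiseless specialization of \Cref{SLS_formulation}.
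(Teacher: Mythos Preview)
Your proposal is correct and follows essentially the same approach as the paper: invoke \Cref{lemma:affine_policies_opt_control} to push optimal MPC trajectories into the affine-policy class and then use the noiseless specialization of \Cref{SLS_formulation} to identify that class with the subspace constraint in \eqref{eq:affine_sls_model_based}. The paper's proof is a three-sentence appeal to those two results, while you spell out both directions explicitly---in particular your reverse direction (right-multiplying the subspace equality by $\begin{bmatrix} x_0^\top & \mathbf 1^\top\end{bmatrix}^\top$ to recover the dynamics recursion) is a direct computation rather than a citation, but this is added detail, not a different argument.
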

\begin{proof}

By Lemma \ref{lemma:affine_policies_opt_control}, all optimal solution to MPC problem \eqref{Traditional_MPC} are given as $u^*_t=K_tx^*_t+u_{s,s} \quad \forall t\in[0,T]$. Hence, they are always attainable by \eqref{eq:affine_sls_model_based}, which optimizes over all closed-loop system responses with time-varying affine policies (as defined in \eqref{ctrl_policy}). Moreover, by Corollary \ref{theorem_sls}, closed-loop time-varying affine system responses $\boldsymbol{\bar\Phi_x}$ and  $\boldsymbol{\bar\Phi_u}$ parameterize all possible system responses of system \eqref{eqn:system} with affine control policies \eqref{ctrl_policy}. Therefore, formulations \eqref{Traditional_MPC} and \eqref{eq:affine_sls_model_based} are equivalent via \eqref{eqn:noiseless_affine_SLS}.

\end{proof}

\section{Data-Driven System Level Synthesis with Affine Policies} \label{sec:data_driven_appr}
Here, we consider the case where dynamic matrices $A$, $B$ and the affine part $s$ are unknown but offline data is available. In this case, we assume that dynamics \eqref{eqn:system} are time-invariant, and show that all affine system responses \eqref{noiseless_constr} can be equivalently characterized by a set of past closed-loop trajectories in a data-driven fashion. We show how the resulting data-driven closed-loop system response can be used to equivalently parameterize problem \eqref{Traditional_MPC}.

In what follows, we introduce some known concepts that will be useful in the remainder of the section.

\begin{definition}[Hankel Matrix \cite{Hankel_PE}]
    The Hankel matrix of depth $L$ of a signal $\sigma: \mathbb{Z} \to \mathbb{R}^p$ of length $T$, is defined as
\begin{equation*}
   H_{L}(\sigma(0:T-1)) = \begin{bmatrix} \sigma(0) &  \sigma(1) & \hdots &   \sigma(T-L) \\
    \sigma(1) &  \sigma(2)  & \hdots & \sigma(T-L+1) \\
    \vdots & \vdots & \ddots & \vdots \\
     \sigma(L-1) & \sigma(L) & \hdots &  \sigma(T-1) \end{bmatrix}.
\end{equation*}
\end{definition}
Here the notion of persistency of excitation is central for Willem's lemma \cite{willems} and its extension to affine systems  \cite{willem_affine}.
\begin{definition}[Persistent Exciting Signal \cite{Hankel_PE}]
Let $ \sigma: \mathbb{Z} \rightarrow \mathbb{R}^p$ be a signal. We say that its finite time restriction $ \tilde{\sigma}:[0,T-1]\to\mathbb{R}^p $ is persistently exciting (PE) of order L if the Hankel matrix $H_{L}( \tilde{\sigma}(0:T-1))$ has full rank.
\end{definition}
 Building on these concepts one can extend Willem's fundamental lemma \cite{willems}, to affine systems\cite{Martinelli_2022} \cite{willem_affine}.
\begin{lemma}[Willem's lemma for affine systems \cite{Martinelli_2022}\label{willems_thm}] 
Consider the discrete-time invariant affine system from equation \eqref{eqn:system} without noise, i.e $w(t)=0$, $\forall t\geq 0$. Assume the pair $(A,B)$ to be controllable. Let $ \{ \tilde u(0:T-1), \tilde x(0:T-1) \}$ be a set of state and input signals generated by the system.
If  $\tilde u(0:T-1)$ is persistently exciting of order $n+L+1$, the pair $ ( u(0:L-1)$, $x(0:L-1))$ is a trajectory of the system if and only if there exists $g \in \mathbb{R}^{T-L+1}$ such that 
\begin{equation*}
    \begin{bmatrix} u(0:L-1) \\ x(0:L-1) \\ 1 \end{bmatrix} = \begin{bmatrix} H_L(\tilde u(0:T-1)) \\ H_L( \tilde x(0:T-1)) \\ \mathbf{1}_{T-L+1}^\top\end{bmatrix}  g.
\end{equation*}
\end{lemma}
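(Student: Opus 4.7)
The plan is to reduce the statement to the classical Willems fundamental lemma for linear time-invariant systems. Two equivalent reductions are available and I would mention both: (i) state augmentation with $\bar x(t) = [x(t)^\top\ 1]^\top$, which turns \eqref{eqn:system} into a purely linear system of dimension $n+1$ and makes transparent why the PE order must rise by one; (ii) a difference argument that subtracts a reference trajectory to cancel the constant term $s$. I would carry the proof out along route (ii) because it makes the sum-to-one constraint $\mathbf{1}^\top g = 1$ appear naturally.

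For the sufficiency direction, denote by $\sigma_k := (\tilde u(k:k+L-1),\tilde x(k:k+L-1))$ the $k$-th column of the state-input Hankel stack; each $\sigma_k$ is itself a length-$L$ trajectory of the affine system since it is a windowed shift of the recorded data. Given any $g$ satisfying the identity, a direct computation yields
\[
x(t+1) = \sum_k g_k\, \tilde x(k+t+1) = A x(t) + B u(t) + \Bigl(\sum_k g_k\Bigr) s,
\]
and the last row of the identity forces $\sum_k g_k = 1$, so the parenthesised factor collapses to $s$ and $(u,x)$ satisfies \eqref{eqn:system} with $w\equiv 0$. The sum-to-one constraint on $g$ is thus precisely what propagates the constant term through superposition.

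For the necessity direction, fix $\sigma_0$ as a reference and set $\delta := (u,x) - \sigma_0$. The increments $\delta$ and $\sigma_k - \sigma_0$ (for $k=1,\dots,T-L$) are all length-$L$ trajectories of the purely linear subsystem $x^+ = Ax + Bu$, because $s$ cancels under differences. The task reduces to writing $\delta$ as a linear combination $\delta = \sum_{k\geq 1}\beta_k(\sigma_k-\sigma_0)$ via the classical Willems lemma applied to the linear subsystem; the choice $g_0 := 1 - \sum_{k\geq 1}\beta_k$ and $g_k := \beta_k$ for $k\geq 1$ then delivers $(u,x) = \sum_k g_k\sigma_k$ with $\mathbf{1}^\top g = 1$, which is the desired identity.

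The hard part will be certifying that the classical Willems lemma indeed applies in this last step, i.e., that the differenced data span all length-$L$ trajectories of the linear subsystem. The cleanest bookkeeping goes through the state-augmentation viewpoint: in dimension $n+1$, the classical PE order is $(n+1)+L = n+L+1$, which matches the hypothesis exactly; the single uncontrollable direction (the appended constant coordinate) is handled automatically because every column of the augmented state Hankel matrix already lies in the slice $\{\bar x_{n+1}=1\}$. Equivalently, the strengthened PE hypothesis of order $n+L+1$ on $\tilde u$ buys the extra dimension needed so that $H_L(\tilde u)$ augmented by the all-ones row $\mathbf{1}_{T-L+1}^\top$ has rank $mL+1$, which is precisely the rank condition required by the Willems-type argument for affine systems.
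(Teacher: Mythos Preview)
The paper does not prove this lemma; it is quoted from \cite{Martinelli_2022} (see also \cite{willem_affine}) as a known result and then used as a black box in the proof of Theorem~\ref{main_thm}. There is therefore no paper-side argument to compare against.

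On your sketch itself: the sufficiency direction is clean and correct. The necessity direction, however, is not yet closed. In route~(ii) the collection $\{\sigma_k-\sigma_0\}_{k\geq 1}$ does consist of length-$L$ trajectories of the linear subsystem, but these vectors are \emph{not} the columns of a Hankel matrix built from a single input/state signal, so the classical Willems lemma does not apply to them verbatim; you would need a separate argument that they nonetheless span all linear trajectories. In route~(i) the augmented pair $(\bar A,\bar B)$ is genuinely not controllable---the appended constant coordinate is an uncontrollable mode at eigenvalue~$1$---so again the classical lemma cannot be invoked as stated; your remark that this direction ``is handled automatically'' is precisely the nontrivial step and needs its own proof. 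What actually has to be established, and what \cite{Martinelli_2022} proves, is that persistency of excitation of order $n+L+1$ together with controllability of $(A,B)$ forces the stacked matrix
\[
\begin{bmatrix} H_L(\tilde u)\\ H_1(\tilde x)\\ \mathbf{1}^\top \end{bmatrix}
\]
to have full row rank $mL+n+1$; once that rank condition holds, the representation of any trajectory with $\mathbf{1}^\top g=1$ follows immediately. Your final paragraph gestures at this, but the claim that the extra PE order ``buys'' rank $mL+1$ for $\bigl[H_L(\tilde u);\,\mathbf{1}^\top\bigr]$ is both weaker than what is needed (the full stack must reach $mL+n+1$) and itself not argued.
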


\subsection{Data-Driven System Level Synthesis with Affine Policies}
We now show that all affine system responses can be parameterized as a function of the Hankel matrices, a matrix $G$ and a vector $\hat g$.
\begin{theorem}(Data Driven SLS with Affine Policies) \label{main_thm}
Consider the discrete-time invariant affine system with dynamics \eqref{eqn:system},  $w(t)=0$ $\forall t\geq 0$, and (A,B) controllable. Suppose that a state/input pair $(\mathbf{\tilde x}, \mathbf{\tilde u}) \in \mathbb{R}^{nL}\times\mathbb{R}^{mL}$ is collected with input $\mathbf{\tilde u}$ PE of order $n + (T+1) + 1$, the system responses $\{ \boldsymbol{\bar \Phi_x}, \boldsymbol{\bar \Phi_u} \}$ satisfying parameterization \eqref{noiseless_constr} over a time horizon $t \in [0,T]$ can be equivalently characterized as
\begin{align*}
 &\left\{ \begin{bmatrix} H_{T+1}(\mathbf{\tilde x}) \\ H_{T+1} \nonumber (\mathbf{\tilde u}) \end{bmatrix} \begin{bmatrix} G - \hat{g} \mathbf{1}_n^\top & \hat g\end{bmatrix}, \quad \forall G, g \in \Gamma(\mathbf{\tilde x})\times \Lambda(\mathbf{\tilde x}) \right\}\\
& \begin{aligned} \nonumber
& \Gamma(\mathbf{\tilde x}): = \{G: H_1(\mathbf{\tilde x})  G = I_n \text{, } \mathbf{1}_{L-T}^\top  G = \mathbf{1}_n^\top \}\\
&\Lambda(\mathbf{\tilde x}):= \{\hat g:H_1(\mathbf{\tilde x})  \hat g = \mathbf{0}_n \text{, } \mathbf{1}_{L-T}^\top \hat g = 1\}
\end{aligned}
\end{align*}
with $G \in \mathbb{R}^{L-T \times n}$, $\hat g \in \mathbb{R}^{L-T}$ and the symbol $H_1$ represents the first $n$ rows of the Hankel matrix $ H_{T+1}(\mathbf{\tilde x})$. 
\end{theorem}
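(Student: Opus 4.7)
The plan is to prove the claimed equivalence by reinterpreting each column of the noiseless system response as a trajectory of either the affine system~\eqref{eqn:system} with $w\equiv 0$ or the associated linear homogeneous system, and then invoking Willem's affine lemma (\Cref{willems_thm}) to express these trajectories in terms of the Hankel matrices of the collected data. Both inclusions must be established: every $\{\boldsymbol{\bar\Phi_x},\boldsymbol{\bar\Phi_u}\}$ satisfying~\eqref{noiseless_constr} arises from some $G\in\Gamma(\tilde{\mathbf x})$ and $\hat g\in\Lambda(\tilde{\mathbf x})$, and conversely every such pair yields responses satisfying~\eqref{noiseless_constr}.

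For the forward direction, I would decompose~\eqref{noiseless_constr} column-wise. The last column $(\boldsymbol{\phi_x},\boldsymbol{\phi_u})$ is a length-$(T+1)$ trajectory of the affine system starting from $x_0=\mathbf 0_n$; since $\tilde{\mathbf u}$ is PE of order $n+T+2$, \Cref{willems_thm} applied with $L=T+1$ yields $\hat g$ with $\mathbf 1_{L-T}^\top\hat g=1$ whose Hankel product reproduces $(\boldsymbol{\phi_x},\boldsymbol{\phi_u})$, and extracting the top block gives $H_1(\tilde{\mathbf x})\hat g=\mathbf 0_n$, so $\hat g\in\Lambda(\tilde{\mathbf x})$. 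For each $i\in[1,n]$, the sum $(\boldsymbol{\Phi_x}\{0\}_{:,i}+\boldsymbol{\phi_x},\,\boldsymbol{\Phi_u}\{0\}_{:,i}+\boldsymbol{\phi_u})$ is an affine trajectory with $x_0=e_i$; another application of \Cref{willems_thm} produces a vector $G_i$ with $\mathbf 1_{L-T}^\top G_i=1$ and $H_1(\tilde{\mathbf x})G_i=e_i$. Stacking the $G_i$ gives $G\in\Gamma(\tilde{\mathbf x})$, and subtracting $\hat g$ recovers the $i$-th column of the linear block as desired.

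For the backward direction, I would take $G\in\Gamma(\tilde{\mathbf x})$ and $\hat g\in\Lambda(\tilde{\mathbf x})$ and construct candidate responses via the Hankel product. Since $\mathbf 1_{L-T}^\top\hat g=1$, \Cref{willems_thm} certifies that the induced $(\boldsymbol{\phi_x},\boldsymbol{\phi_u})$ is an affine trajectory, and $H_1(\tilde{\mathbf x})\hat g=\mathbf 0_n$ anchors it at $x_0=\mathbf 0_n$, producing the affine block of~\eqref{noiseless_constr}. For the linear block, the key observation is that the difference of two affine trajectories obeys the homogeneous dynamics $\Delta x_{t+1}=A\,\Delta x_t+B\,\Delta u_t$ because the constant term $s$ cancels; thus the column parameterized by $G_i-\hat g$ is a homogeneous trajectory with initial state $H_1(\tilde{\mathbf x})(G_i-\hat g)=e_i$, matching the $i$-th column of~\eqref{noiseless_constr}.

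The main obstacle is the bookkeeping in the forward direction: the $G_i$ and $\hat g$ supplied by \Cref{willems_thm} must be shown to assemble into a single matrix of the prescribed block structure $[G-\hat g\,\mathbf 1_n^\top,\ \hat g]$, with the constraints $H_1(\tilde{\mathbf x})G=I_n$ and $\mathbf 1_{L-T}^\top G=\mathbf 1_n^\top$ holding simultaneously. A secondary but essential check is to verify that the PE order $n+T+2$ assumed on $\tilde{\mathbf u}$ exactly matches the hypothesis of \Cref{willems_thm} for trajectory length $L=T+1$, so that every required trajectory is indeed representable by the Hankel data.
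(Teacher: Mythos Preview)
Your proposal is correct and follows essentially the same route as the paper. The forward direction is identical: apply \Cref{willems_thm} to the affine trajectory with initial state $\mathbf 0_n$ to obtain $\hat g$, and to the affine trajectories with initial states $e_i$ (equivalently, to $\boldsymbol{\bar\Phi}\,[e_i^\top,\,1]^\top$) to obtain the columns $G_i$, then subtract. For the backward direction the paper proceeds slightly more directly---it uses that each column of the Hankel matrix is itself a system trajectory and multiplies through by $[G-\hat g\,\mathbf 1_n^\top,\ \hat g]$ to recover the right-hand side of~\eqref{noiseless_constr}---whereas you invoke the ``if'' part of \Cref{willems_thm} and then difference two affine trajectories to obtain a homogeneous one; the two arguments are equivalent and of the same length.
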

\begin{proof}
$(\subseteq)$
Consider a set of system responses $\{ \boldsymbol{\bar \Phi_x}, \boldsymbol{\bar \Phi_u} \}$ that satisfies  equation \eqref{noiseless_constr}. By \Cref{willems_thm} it follows that, for all initial conditions $x_0$, a vector $g\in \mathbb{R}^{L-T}$ exists such that
\begin{align*}
 & \begin{bmatrix} \mathbf{x} \\ \mathbf{u} \end{bmatrix}=\begin{bmatrix}  \boldsymbol{\Phi_{x}}\{0\} & \boldsymbol{\phi_x} \\   \boldsymbol{\Phi_{u}}\{0\} &\boldsymbol{\phi_u}\end{bmatrix} \begin{bmatrix}  x_0 \\ 1\end{bmatrix}  = \begin{bmatrix} H_{T+1}(\mathbf{\tilde x}) \\ H_{T+1}(\mathbf{\tilde u}) \end{bmatrix} \cdot g \\
& \text{s.t.} \quad  \begin{aligned} 
&\mathbf{1}_{L-T}^\top g = 1. \\
\end{aligned}
\end{align*}
To find a parameterization of the last column of the affine system response, or equivalently for $x_0 = \mathbf{0}_{n}$, we choose $\hat g$, such that $\mathbf{1}_{L-T}^\top  \hat g = 1$. Note that this leads to the additional constraint $H_1(\mathbf{\tilde x}) \cdot \hat g = 0_n$, since the first set of n-rows of $\bar \Phi_x$ need to be zero as they map the constant term to the initial condition $x_0= {\bf 0}_n$.
To find a parameterization of the first $n$-columns of the system response, we choose $x_0 = e^{(0)}$, the first unit vector in $\mathbb{R}^n$, which will give us
\begin{align*}
&  \begin{bmatrix}  \boldsymbol{\Phi_{x}}\{0\} & \boldsymbol{\phi_x} \\   \boldsymbol{\Phi_{u}}\{0\} &\boldsymbol{\phi_u}\end{bmatrix}  \begin{bmatrix} e^{(0)} \\ 1 \end{bmatrix}  = \begin{bmatrix} H_{T+1}(\mathbf{\tilde x}) \\ H_{T+1}(\mathbf{\tilde u}) \end{bmatrix} g^{(0)} \\
& \text{s.t} \quad \begin{aligned}
& \mathbf{1}_{L-T}^\top  g^{(0)} = 1\\
\end{aligned}
\end{align*}
We can repeat this process $n$ times for each unit vector $e^{(0)}, e^{(1)}, ..., e^{(n)}$ and the corresponding vectors $G = [g^{(0)}, g^{(1)}, ..., g^{(n)}]\in \mathbb{R}^{(L-T)\times n }$, to end up with 
\begin{align*}
  & \begin{bmatrix}  \boldsymbol{\Phi_{x}}\{0\} & \boldsymbol{\phi_x} \\   \boldsymbol{\Phi_{u}}\{0\} &\boldsymbol{\phi_u}\end{bmatrix}\begin{bmatrix} I_n\\ \mathbf{1}_n^\top \end{bmatrix} 
 =  \begin{bmatrix} H_{T+1}(\mathbf{\tilde x}) \\ H_{T+1}(\mathbf{\tilde x}) \end{bmatrix} G\\
&\text{s.t.} \quad \begin{aligned}
&\mathbf{1}_{L-T}^\top G = \mathbf{1}_n^\top.
\end{aligned}
\end{align*}
Now, to get the parameterization of the first $n$ columns of $\boldsymbol {\bar \Phi_x}$ and $\boldsymbol {\bar \Phi_u}$ we subtract the above parameterization of the last column after concatenating it $n$ times, leaving us with
\begin{align*}
     &\begin{bmatrix}  \boldsymbol{\Phi_{x}}\{0\} & \boldsymbol{\phi_x} \\   \boldsymbol{\Phi_{u}}\{0\} &\boldsymbol{\phi_u}\end{bmatrix} \begin{bmatrix} I_n \\ \mathbf{0}_n^\top \end{bmatrix}= \begin{bmatrix} H_{T+1}(\mathbf{\tilde x}) \\ H_{T+1}(\mathbf{\tilde u}) \end{bmatrix} \begin{bmatrix} G - \hat g \mathbf{1}_{n}^\top\end{bmatrix}\\
    &\text{s.t.} \quad \begin{aligned}
    & \mathbf{1}_{L-T}^\top  G = \mathbf{1}_n^\top \text{ and } \mathbf{1}_{L-T}^\top \hat g = 1,
\end{aligned}
\end{align*}
thus proving the first direction. Note that this leads to the additional constraint $H_1 \cdot G = I_n$, because the first $n$ rows of $\bar \Phi_x$ map the initial condition $x_0$ to itself and $H_1(\tilde{\mathbf{x}})\hat g=0$.\newline
$(\supseteq)$ First we note that each column of the Hankel matrix $[H_{T+1}]_{:,i}$, $\forall i \in [0,L-T]$  satisfies the dynamics
\begin{align*}
    [H_{T+1}(\mathbf{\tilde x})]_{:,i} =&
    \mathcal{Z}\mathcal{A}[H_{T+1}(\mathbf{\tilde x})]_{:,i}\\ + &\mathcal{Z}\mathcal{B}[H_{T+1}(\mathbf{\tilde u})]_{:,i} + \begin{bmatrix} [H_{T+1}(\mathbf{\tilde x})]_{:,i} \\ \mathbf{s}\end{bmatrix}
\end{align*}
By concatenating each column we get
\begin{equation*}
    \begin{bmatrix} I_{n(T+1)}-\mathcal{Z}\mathcal{A} & -\mathcal{Z}\mathcal{B} \end{bmatrix}  \begin{bmatrix} H_{T+1}(\mathbf{\tilde x}) \\ H_{T+1}(\mathbf{\tilde u}) \end{bmatrix} = \begin{bmatrix} H_1(\mathbf{\tilde x}) \\ \mathbf{s} \mathbf{1}_{L-T}^\top \end{bmatrix}.
\end{equation*}
Upon multiplying the matrix $\begin{bmatrix} G - \hat{g}\mathbf{1}_{n} & \hat g \end{bmatrix}$, with $G$ and $\hat g$ satisfying the  conditions in \Cref{main_thm}, for the left side of the previous equation we obtain
\begin{equation*}
\begin{bmatrix} I_{n(T+1)}-\mathcal{Z}\mathcal{A} & -\mathcal{Z} \mathcal{B} \end{bmatrix}
     \begin{bmatrix}  \boldsymbol{\Phi_{x}}\{0\} & \boldsymbol{\phi_x} \\   \boldsymbol{\Phi_{u}}\{0\} &\boldsymbol{\phi_u}\end{bmatrix} 
\end{equation*}
and for the right-hand side
\begin{align*}
    &\begin{bmatrix} H_1(\mathbf{\tilde x}) \\ \mathbf{s}\mathbf{1}_{L-T}^\top\end{bmatrix}
    \begin{bmatrix} G - \hat g \mathbf{1}_n^\top & \hat g \end{bmatrix}=\\
    &\begin{bmatrix} I_n & \mathbf{0}_n \\ \mathbf{s}\mathbf{1}_{L-T}^\top \begin{bmatrix} G - \hat g \mathbf{1}_n^\top \end{bmatrix} & \mathbf{s}\mathbf{1}_{L-T}^\top \cdot \hat g\end{bmatrix} 
    = \begin{bmatrix} I_n & 0 \\ 0 & \mathbf{s}\end{bmatrix}.
\end{align*}
The first equality follows directly from $H_1(\mathbf{\tilde x})G=I_n, H_1(\mathbf{\tilde x})\hat g=\mathbf{0}_n$. 
and the second equality follows from $\mathbf1_{L-T}^\top  G = \mathbf 1_n^\top \text{, } \mathbf 1_{L-T}^\top \hat g = 1$, thus concluding the proof.
\end{proof}

\subsection{Data-Driven SLS with Affine Policies for MPC}
The data-driven SLS formulation provided in \Cref{main_thm} can be used to reformulate problem \eqref{Traditional_MPC} in a data-driven manner. In particular, given a collected trajectory $( \mathbf{\tilde u}, \mathbf{\tilde x})$, with $\mathbf{\tilde u}$ persistently exciting of order $n + (T+1) + 1$, we can reformulate the MPC problem \eqref{cor:affine_sls_CFTOC} as:

\begin{corollary}\label{cor:data_affine_sls_CFTOC}
Under the conditions of \Cref{main_thm}, the MPC subroutine with time horizon $T$
\begin{align} \label{eq:dd_affine_sls}
    &\min_{\mathbf x, \mathbf u, G, \hat g}  \quad\sum_{t=0}^{T}\quad J(\mathbf x, \mathbf u)\\
    &\begin{aligned}
    \text{s.t.} & \begin{bmatrix} \mathbf x \\ \mathbf u \end{bmatrix} = \begin{bmatrix} H_{T+1}(\mathbf{\tilde x}) \\ H_{T+1}(\mathbf{\tilde u}) \end{bmatrix} \begin{bmatrix} G - \hat g \mathbf{1}_n^\top & \hat g\end{bmatrix}  \begin{bmatrix} x_0 \\  \mathbf 1\end{bmatrix}\\
    & G,\hat g \in \Gamma ( \tilde{\mathbf{x}} ) \times \Lambda ( \mathbf{\tilde{x}}) \nonumber  \\
    & \mathcal H_x \mathbf x+ \mathcal H_u \mathbf u \leq \mathbf h\\
    & x_0 = x(0), \text{ }x_T \in \mathcal{X}_T
\end{aligned}
\end{align}
with $J(\mathbf x,\mathbf u)$ from definition \eqref{cost_fn} and constraints $\mathcal H_x = \rm{diag}(H_x)$ and $\mathcal H_u=\rm{diag}(H_u)$ adapted to  signal notation, is solved by the same optimal solution as the MPC problem \eqref{Traditional_MPC}. That is, the sequence $\{x^*_{0:t},u^*_{0:T}\}$ is an optimal solution to problem \eqref{Traditional_MPC} if and only if it is an optimal solution to problem \eqref{eq:dd_affine_sls}.
\end{corollary}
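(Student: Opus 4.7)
The plan is to chain two equivalences already in hand: \Cref{cor:affine_sls_CFTOC}, which reformulates the MPC problem \eqref{Traditional_MPC} as the model-based affine SLS problem \eqref{eq:affine_sls_model_based}, and \Cref{main_thm}, which provides a data-driven parameterization of all pairs $\{\boldsymbol{\bar\Phi_x}, \boldsymbol{\bar\Phi_u}\}$ satisfying the subspace constraint \eqref{noiseless_constr}. The only work left is to substitute the latter into the former.

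First, I would apply \Cref{cor:affine_sls_CFTOC} to translate \eqref{Traditional_MPC} into the model-based affine SLS form, whose decision variables are $\{\boldsymbol{\bar\Phi_x}, \boldsymbol{\bar\Phi_u}\}$ subject to \eqref{noiseless_constr}, with cost and polytopic constraints evaluated on the closed-loop trajectory $\mathbf{x}=\boldsymbol{\bar\Phi_x}[x_0^\top,\mathbf{1}^\top]^\top$, $\mathbf{u}=\boldsymbol{\bar\Phi_u}[x_0^\top,\mathbf{1}^\top]^\top$. By that corollary, the set of optimal trajectories of this intermediate problem coincides with that of \eqref{Traditional_MPC}. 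Next, I would invoke \Cref{main_thm} to replace the feasible set of system responses with its Hankel-matrix characterization: any $\{\boldsymbol{\bar\Phi_x}, \boldsymbol{\bar\Phi_u}\}$ satisfying \eqref{noiseless_constr} can be written as the stacked Hankel matrix acting on $[G-\hat g\mathbf{1}_n^\top,\ \hat g]$ for some $(G,\hat g)\in\Gamma(\mathbf{\tilde x})\times\Lambda(\mathbf{\tilde x})$, and conversely every such pair is a valid system response. Plugging this representation into the SLS reformulation, the closed-loop trajectory collapses to the exact expression appearing in the first constraint of \eqref{eq:dd_affine_sls}, while the quadratic cost, the polytopic constraints, the initial-condition constraint $x_0=x(0)$, and the terminal constraint $x_T\in\mathcal{X}_T$ transfer verbatim.

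Since each of the two steps is a bijective reparameterization of the decision variables that preserves both feasibility and objective value, the set of optimal trajectories $\{x^*_{0:T},u^*_{0:T}\}$ is identical across \eqref{Traditional_MPC}, the intermediate SLS problem, and \eqref{eq:dd_affine_sls}, which yields the claimed ``if and only if''. I do not anticipate a substantive technical obstacle, as the heavy lifting is carried out by the two existing results. The one point that warrants explicit mention is checking that the persistency-of-excitation assumption on $\mathbf{\tilde u}$ is of order $n+(T+1)+1$, precisely what \Cref{main_thm} requires; this is guaranteed by the hypothesis of the corollary. A secondary bookkeeping check is that the data-driven parameterization does not introduce hidden coupling between the decision variables $\mathbf{x},\mathbf{u}$ and $(G,\hat g)$ beyond the first constraint of \eqref{eq:dd_affine_sls} — but this is immediate, since $\mathbf{x},\mathbf{u}$ are defined as affine functions of $(G,\hat g)$ and $x_0$.
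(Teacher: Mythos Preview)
Your proposal is correct and follows essentially the same approach as the paper: the paper's proof simply states that by \Cref{main_thm} problem \eqref{eq:dd_affine_sls} is equivalent to \eqref{eq:affine_sls_model_based}, after which the conclusion follows directly from \Cref{cor:affine_sls_CFTOC}. Your write-up is a more detailed expansion of this two-step chain, with the same underlying structure.
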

\begin{proof}
By \Cref{main_thm}, problem \eqref{eq:dd_affine_sls} is equivalent to \eqref{eq:affine_sls_model_based}. Consequently, the proof follows directly from \Cref{cor:affine_sls_CFTOC}.
\end{proof}

\begin{remark}
    Note that, in formulation \eqref{eq:dd_affine_sls} dynamic matrices $A$ and $B$, as well as the constant $\mathbf{s}$ are learned from data. This enables data-driven computation of closed-loop maps for affine systems without requiring prior knowledge of $\mathbf{s}$.
\end{remark}

\section{Simulation}
\label{section_simulation}
We demonstrate the effectiveness of our proposed algorithms, given in formulations \eqref{eq:affine_sls_model_based} and \eqref{eq:dd_affine_sls}, by comparing them to the traditional nominal formulation in problem \eqref{Traditional_MPC}. To this end, we consider the linearized swing dynamics with constant Coulomb friction, which constitutes a two-dimensional affine system with states $\theta$ (phase angle deviation) and $\omega$ (frequency deviation), and $u$ representing the control action. This affine system model captures the essential behavior of power system generators under friction effects: \footnote{Code implementation of the described simulation can be found at: \url{https://github.com/lukaschu/SLS-for-Affine-Policies}.}
\begin{equation} \label{eq:sys_dyn}
\begin{bmatrix}
\theta(t+1) \\ \omega(t+1)
\end{bmatrix}
= A \begin{bmatrix}
\theta(t) \\ \omega(t)
\end{bmatrix} + B u(t) + s .
\end{equation}
The system matrices $\{A,B\}$ and the constant $s$ are given as
\begin{equation} 
A=\!\!=\!\!\begin{bmatrix} 1 & \Delta t \\ -\cfrac{\Delta t}{m} & 1\!-\!\cfrac{d\Delta t}{m} \end{bmatrix},B=\!\!\begin{bmatrix} 0 \\ 1 \end{bmatrix},s=\!\!\!\begin{bmatrix} 0 \\ c \end{bmatrix}.
\end{equation} 
The parameters are chosen as follows: the inverse inertia coefficient $m^{-1}=0.8$, the damping coefficient $d=0.5$, and the constant Coulomb friction $c = 0.1$. We choose a discretization timestep of $\Delta t=0.2$.

We define a time-invariant quadratic cost function with a reference point at \( \theta_{\text{ref}} = 0.6 \) given by  
\begin{equation}
J(\theta(t), \omega(t)) = (\theta(t) - \theta_{\text{ref}})^2
\end{equation}  

The system is subject to the following time-invariant state and input constraints:  
\begin{equation}
\begin{aligned}
    \mathcal{X} &:= \{\theta \mid -1.2  \leq \theta \leq 1.2, \quad -0.8  \leq \omega \leq 0.8 \}, \\
    \mathcal{U} &:= \{u \mid -0.5  \leq u \leq 0.5 \}.
\end{aligned}
\end{equation}  

For the data-driven approach, we assume access to a set of observed system behaviors \( \{\tilde{x}, \tilde{u} \} \) from \eqref{eq:sys_dyn}, where \( \tilde{u} \) is PE of order \( 2 + (T+1) + 1 \). The results for the initial condition \( \theta(0) = 0 \), \( \omega(0) = 0 \) and a finite horizon $T=10$ over a time horizon of $15$ steps are presented in \Cref{fig:mpc_comparison}, comparing all three controllers. One can see that both the model-based SLS parameterization and the data-driven SLS formulation lead to the same closed-loop solutions as the traditional MPC.

\begin{figure}[!t] 
    \centering
    \includegraphics[width=0.95\linewidth]{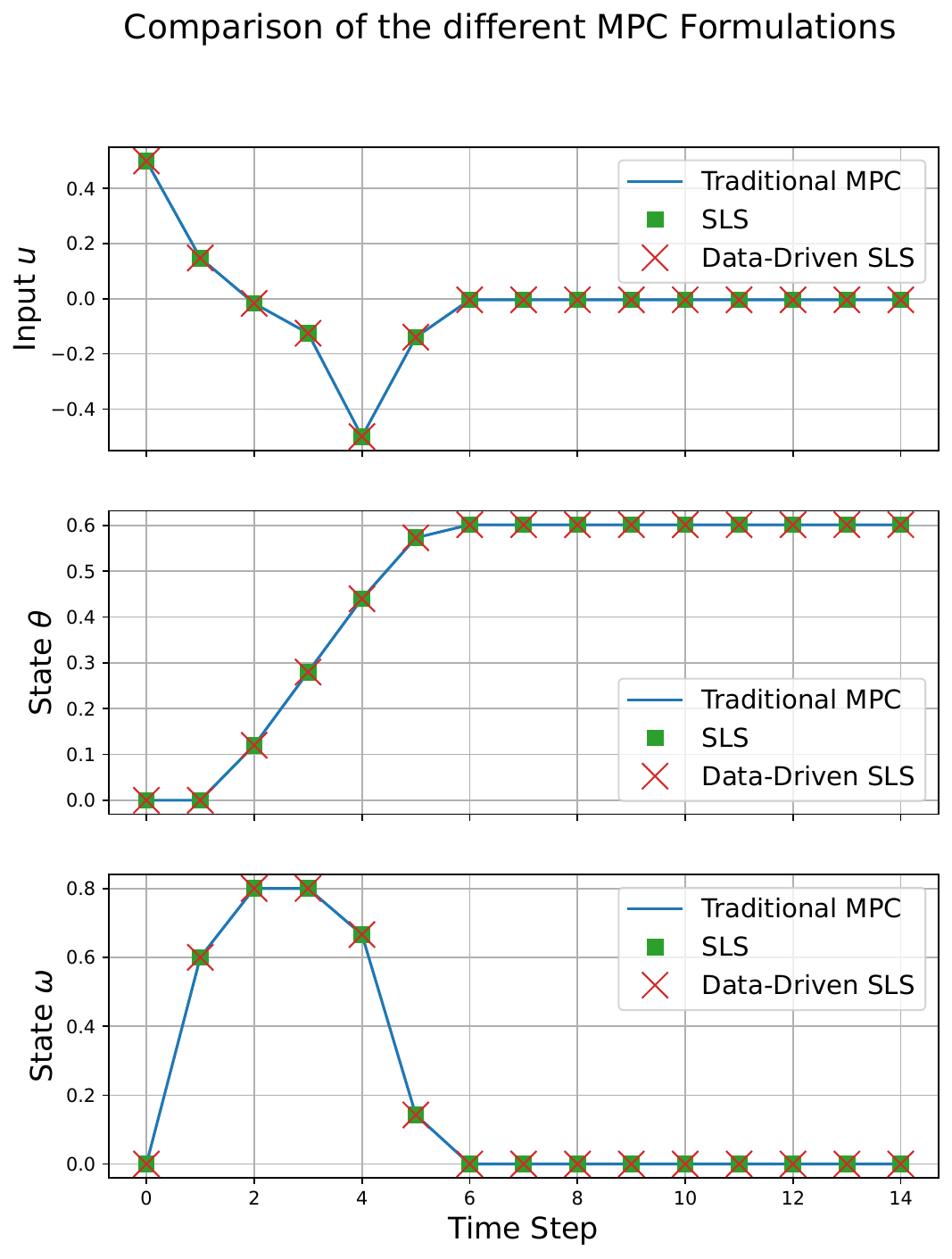}
    \caption{Comparison of the state and input trajectories obtained from the traditional MPC, SLS and data-driven SLS formulations.}
    \label{fig:mpc_comparison}
\end{figure}

\section{Conclusions}
\label{discussion_section}
We introduced the affine SLS approach, which provides a parameterization of all closed-loop system responses under time-varying affine policies. We demonstrated that any nominal MPC formulation with quadratic cost, affine constraints and affine systems, can be equivalently expressed using this new closed-loop representation. Furthermore, we extended our parameterization to the data-driven case, and established that all affine system responses can be characterized purely in terms of observed closed-loop system behaviors. We then showed how this characterization can be integrated into a nominal data-driven MPC framework. Finally, we validated our approach through simulations, demonstrating that our method yields solutions equivalent to those obtained using traditional model-based MPC. An extension of this approach to the distributed case \cite{dlmpc} providing guarantees on asymptotic stability and recursive feasibility \cite{dlmpc2} is deferred to future work.

\bibliographystyle{IEEEtran} 
\bibliography{ref}

\begin{thebibliography}{10}
\providecommand{\url}[1]{#1}
\csname url@samestyle\endcsname
\providecommand{\newblock}{\relax}
\providecommand{\bibinfo}[2]{#2}
\providecommand{\BIBentrySTDinterwordspacing}{\spaceskip=0pt\relax}
\providecommand{\BIBentryALTinterwordstretchfactor}{4}
\providecommand{\BIBentryALTinterwordspacing}{\spaceskip=\fontdimen2\font plus
\BIBentryALTinterwordstretchfactor\fontdimen3\font minus \fontdimen4\font\relax}
\providecommand{\BIBforeignlanguage}[2]{{%
\expandafter\ifx\csname l@#1\endcsname\relax
\typeout{** WARNING: IEEEtran.bst: No hyphenation pattern has been}%
\typeout{** loaded for the language `#1'. Using the pattern for}%
\typeout{** the default language instead.}%
\else
\language=\csname l@#1\endcsname
\fi
#2}}
\providecommand{\BIBdecl}{\relax}
\BIBdecl

\bibitem{power_grid_ctrl}
X.~Gong, X.~Wang, and B.~Cao, ``On data-driven modeling and control in modern power grids stability: Survey and perspective,'' \emph{Applied Energy}, vol. 350, p. 121740, 2023.

\bibitem{large_scale_sn}
M.~Szell, R.~Lambiotte, and S.~Thurner, ``Multirelational organization of large-scale social networks in an online world,'' \emph{Proceedings of the National Academy of Sciences, PNAS}, vol. 107, no.~31, pp. 13\,636--13\,641, 2010.

\bibitem{dd_large_scale_opt}
Y.~Li, H.~Zhang, Z.~Wang, C.~Huang, and H.~Yan, ``Data-driven decentralized control for large-scale systems with sparsity and communication delays,'' \emph{IEEE Transactions on Systems, Man, and Cybernetics: Systems}, vol.~53, no.~9, pp. 5614--5624, 2023.

\bibitem{dd_large_scale_loewner}
I.~V. Gosea, C.~Poussot-Vassal, and A.~C. Antoulas, ``Chapter 15 - data-driven modeling and control of large-scale dynamical systems in the loewner framework: Methodology and applications,'' in \emph{Numerical Control: Part A}, ser. Handbook of Numerical Analysis.\hskip 1em plus 0.5em minus 0.4em\relax Elsevier, 2022, vol.~23, pp. 499--530.

\bibitem{data_dlmpc}
C.~{Amo~Alonso}, F.~Yang, and N.~Matni, ``Data-driven distributed and localized model predictive control,'' \emph{IEEE Open Journal of Control Systems}, vol.~1, pp. 29--40, 2022.

\bibitem{affine_sys_ctrl}
M.~Wang, J.~Qiu, H.~Yan, Y.~Tian, and Z.~Li, ``Data-driven control for discrete-time piecewise affine systems,'' \emph{Automatica}, vol. 155, p. 111168, 09 2023.

\bibitem{Martinelli_2022}
A.~Martinelli, M.~Gargiani, M.~Draskovic, and J.~Lygeros, ``Data-driven optimal control of affine systems: A linear programming perspective,'' \emph{IEEE Control Systems Letters}, vol.~6, p. 3092–3097, 2022.

\bibitem{sls}
J.~Anderson, J.~C. Doyle, S.~H. Low, and N.~Matni, ``System level synthesis,'' \emph{Annual Reviews in Control}, vol.~47, pp. 364--393, 2019.

\bibitem{dlmpc}
C.~{Amo~Alonso}, J.~S. Li, J.~Anderson, and N.~Matni, ``Distributed and localized model predictive control. part i: Synthesis and implementation,'' \emph{IEEE Transactions on Control of Network Systems}, vol.~PP, pp. 1--12, 01 2022.

\bibitem{sls_robust}
S.~Chen, N.~Matni, M.~Morari, and V.~M. Preciado, ``System level synthesis-based robust model predictive control through convex inner approximation,'' \emph{ArXiv}, vol. abs/2111.05509, 2021.

\bibitem{sls_antoine}
A.~P. Leeman, J.~Sieber, S.~Bennani, and M.~N. Zeilinger, ``Robust optimal control for nonlinear systems with parametric uncertainties via system level synthesis,'' in \emph{2023 62nd IEEE Conference on Decision and Control (CDC)}, 2023, pp. 4784--4791.

\bibitem{sieber_sls}
J.~Sieber, S.~Bennani, and M.~N. Zeilinger, ``A system level approach to tube-based model predictive control,'' \emph{IEEE Control Systems Letters}, vol.~6, pp. 776--781, 2021.

\bibitem{chen2020robust}
S.~Chen, H.~Wang, M.~Morari, V.~M. Preciado, and N.~Matni, ``Robust closed-loop model predictive control via system level synthesis,'' in \emph{2020 59th IEEE Conference on Decision and Control (CDC)}.\hskip 1em plus 0.5em minus 0.4em\relax IEEE, 2020, pp. 2152--2159.

\bibitem{inf_horizon_sls}
O.~Kjellqvist and J.~Yu, ``On infinite-horizon system level synthesis problems,'' in \emph{2022 IEEE 61st Conference on Decision and Control (CDC)}, 2022, pp. 5238--5244.

\bibitem{input_output_sls}
L.~Furieri, Y.~Zheng, A.~Papachristodoulou, and M.~Kamgarpour, ``An input–output parametrization of stabilizing controllers: Amidst youla and system level synthesis,'' \emph{IEEE Control Systems Letters}, vol.~3, no.~4, pp. 1014--1019, 2019.

\bibitem{NN_SLS}
L.~Furieri, C.~L. Galimberti, and G.~Ferrari-Trecate, ``Neural system level synthesis: Learning over all stabilizing policies for nonlinear systems,'' in \emph{2022 IEEE 61st Conference on Decision and Control (CDC)}, 2022, pp. 2765--2770.

\bibitem{Nonlin_sls_pol_dyn}
L.~Conger, J.~S.~L. Li, E.~Mazumdar, and S.~L. Brunton, ``Nonlinear system level synthesis for polynomial dynamical systems,'' in \emph{2022 IEEE 61st Conference on Decision and Control (CDC)}, 2022, pp. 3846--3852.

\bibitem{sls_appr_for_DTL}
D.~Ho, ``A system level approach to discrete-time nonlinear systems,'' in \emph{2020 American Control Conference (ACC)}, 2020, pp. 1625--1630.

\bibitem{borrelli_constrained}
F.~Borrelli, \emph{Constrained optimal control for hybrid systems}.\hskip 1em plus 0.5em minus 0.4em\relax Springer, 2003.

\bibitem{model_to_data_driven_shift}
Z.~Hou and Z.~Wang, ``From model-based control to data-driven control: Survey, classification and perspective,'' \emph{Information Sciences}, vol. 235, p. 3–35, 06 2013.

\bibitem{Berberich_dd_MPC}
J.~Berberich, J.~K{\"o}hler, M.~A. Müller, and F.~Allgöwer, ``Robust constraint satisfaction in data-driven mpc,'' in \emph{2020 59th IEEE Conference on Decision and Control (CDC)}, 2020, pp. 1260--1267.

\bibitem{ber_köhler_rob_dd_MPC}
J.~Berberich, J.~Köhler, M.~A. Müller, and F.~Allgöwer, ``Data-driven model predictive control with stability and robustness guarantees,'' \emph{IEEE Transactions on Automatic Control}, vol.~66, no.~4, pp. 1702--1717, 2021.

\bibitem{Johannes_dd_MPC}
J.~Berberich, J.~K{\"o}hler, M.~A. Müller, and F.~Allgöwer, ``Linear tracking mpc for nonlinear systems, part ii: The data-driven case,'' \emph{IEEE Transactions on Automatic Control}, vol.~67, no.~9, pp. 4406--4421, 2022.

\bibitem{data_sls}
A.~Xue and N.~Matni, ``Data-driven system level synthesis,'' in \emph{Proceedings of the 3rd Conference on Learning for Dynamics and Control}, ser. Proceedings of Machine Learning Research, vol. 144.\hskip 1em plus 0.5em minus 0.4em\relax PMLR, 07 -- 08 June 2021, pp. 189--200.

\bibitem{micheli2024data}
F.~Micheli, A.~Tsiamis, and J.~Lygeros, ``Data-driven distributionally robust system level synthesis,'' \emph{arXiv preprint arXiv:2405.18142}, 2024.

\bibitem{morari_MPQP}
A.~Bemporad, M.~Morari, V.~Dua, and E.~N. Pistikopoulos, ``The explicit linear quadratic regulator for constrained systems,'' \emph{Automatica}, vol.~38, no.~1, p. 3–20, Jan. 2002.

\bibitem{Hankel_PE}
I.~Markovsky and F.~Dörfler, ``Behavioral systems theory in data-driven analysis, signal processing, and control,'' \emph{Annual Reviews in Control}, vol.~52, 11 2021.

\bibitem{willems}
J.~Polderman and J.~Willems, \emph{\BIBforeignlanguage{Undefined}{Introduction to Mathematical Systems Theory: A Behavioral Approach}}, ser. Texts in Applied Mathematics.\hskip 1em plus 0.5em minus 0.4em\relax Germany: Springer, 1998.

\bibitem{willem_affine}
J.~Berberich, J.~Köhler, M.~A. Müller, and F.~Allgöwer, ``Linear tracking mpc for nonlinear systems—part ii: The data-driven case,'' \emph{IEEE Transactions on Automatic Control}, vol.~67, no.~9, pp. 4406--4421, 2022.

\bibitem{dlmpc2}
C.~{Amo~Alonso}, J.~S. Li, N.~Matni, and J.~Anderson, ``Distributed and localized model predictive control. part ii: Theoretical guarantees,'' \emph{IEEE Transactions on Control of Network Systems}, vol.~PP, pp. 1--12, 09 2023.

\end{thebibliography}
\end{document}